\newtheorem{Lemma}{Lemma}
\newtheorem{Corollary}{Corollary}
\newtheorem{Proposition}{Proposition}
\newcommand{\Sv}{\mathbf{S}}
\newcommand{\onev}{\mathbf{1}}
\newcommand{\tbs}{t_\mathrm{bs}}
\newcommand{\tn}{t_\mathrm{d}}
\newcommand{\TDW}{\overline{T}_\mathrm{dw}}
\newcommand{\TDD}{\overline{T}_{\eta }}
\newcommand{\TDDs}{\overline{T}_1 }
\newcommand{\TDDr}{\overline{T}_0  }
\newcommand{\etas}{\eta_1 }
\newcommand{\etar}{\eta_0 }
\newcommand{\TDDi}{\overline{T}_i  }
\newcommand{\etai}{\eta_i }
\newcommand{\ps}{p_\mathrm{s}}
\newcommand{\nc}{n_\mathrm{c}}
\newcommand{\Mc}{M_\mathrm{c}}
\newcommand{\Ta}{T_\mathrm{a}}
\newcommand{\Ts}{T_\mathrm{s}}
\title{MDS-Coded Distributed Caching for\\ Low Delay Wireless Content Delivery}
\author{Amina Piemontese,~\IEEEmembership{Member,~IEEE}, and Alexandre Graell i Amat,~\IEEEmembership{Senior Member,~IEEE}
\thanks{The authors are with the Department of Signals and Systems, Chalmers University of Technology, 412 96 Gothenburg, Sweden (e-mail: \{aminap,alexandre.graell\}@chalmers.se.}
\thanks{Amina Piemontese is supported by a Marie Curie fellowship (contract 658785-DISC-H2020-MSCA-IF-2014). This work was also was partially funded by the Swedish Research Council under grant \#2011-5961.}
\thanks{The paper was presented in part at the International Symposium on Turbo Codes \& Iterative Information Processing, Brest, France, Sep. 2016}} 
\begin{document}

\maketitle

%%%%%%%%%%%%%%%%%%%%%%%%%%%%%%%%%%%%%%%%%%%%%%%%% ABSTRACT %%%%%%%%%%%%%%%%%%%%%%%%%%%%%%%%%%%%%
%\begin{itemize}
%\item Cambiare DS in DC (distributed caching)? Analogamente, storage node in caching node?
%\item Aggiungere riferimenti per Poisson process
%\item Commenti del system model in Results
%\end{itemize}
%%%%%%%%%%%%%%%%%%%%%%%%%%%%%%%%%%%%%%%%%%%%%% INTRODUCTION %%%%%%%%%%%%%%%%%%%%%%%%%%%%%%%%%%%%%
\begin{abstract}
We investigate the use of maximum distance separable (MDS) codes to cache popular content to reduce the download delay of wireless content delivery. In particular, we consider a cellular system where devices roam in an out of a cell according to a Poisson random process. Popular content is cached in a limited number of the mobile devices using an MDS code and can be downloaded from the mobile devices using device-to-device communication. We derive an analytical expression for the delay incurred in downloading content from the wireless network and show that distributed caching using MDS codes can dramatically reduce the download delay with respect to the scenario where content is always downloaded from the base station and to the case of uncoded distributed caching.
\end{abstract}

% Content is stored in a number of the mobile devices and can be retrieved from them using device-to-device communication or, alternatively, from the base station (BS). 

\section{Introduction}
The proliferation of mobile devices and the surge of a myriad of multimedia applications has resulted in an exponential growth of the mobile data traffic. In this context, 
wireless caching has emerged as a powerful technique to overcome the backhaul bottleneck, by reducing the backhaul rate and the delay in retrieving content from the network. The key idea is to store popular content closer to the end users. In \cite{Shan13}, a novel system architecture named \emph{femtocaching} was proposed. It consists of deploying a number of small base stations (BSs) with large storage capacity, in which content is stored during periods of offpeak traffic. The mobile users can download  content from the small BSs, which results in a higher throughput per user. In \cite{Gol14}, it was proposed to store content directly in the mobile devices.  Users can then retrieve content from neighboring devices using device-to-device (D2D) communication or, alternatively, from the serving BS. 

In both scenarios, content may be stored using an erasure correcting code, which brings gains with respect to uncoded caching \cite{Paa13,Ped15,Bio15,Ped16}. The use of erasure correcting codes establishes an interesting link between distributed caching for content delivery and distributed storage (DS) for reliable data storage. The key difference is that in the wireless network scenario, data can be downloaded from the storage nodes (the small BSs or the mobile devices) but also from a serving macro BS, which has always the content available. Therefore, the reliability requirements in DS for reliable data storage can be relaxed. In \cite{Bio15}, the placement of content encoded using a maximum distance separable (MDS) code to small BSs was investigated and it was shown that a careful placement allows to significantly reduce the backhaul rate. In \cite{Paa13}, for the scenario where content is stored directly in the mobile devices, the repairing of the lost data when a device storing data leaves the network was considered. Assuming  instantaneous repair, the communication cost of data download and repair was investigated. In \cite{Ped15,Ped16}, a repair scheduling where repair is performed periodically was introduced and analytical expressions for the overall communication cost of content download and data repair as a function of the repair interval were derived. Using these expressions, the overall communication cost entailed by storing content using MDS codes, regenerating codes \cite{Dim10}, and locally repairable codes \cite{Pap14} was evaluated in \cite{Ped16} and it was shown that storing content using erasure correcting code can reduce the overall communication cost with respect to the scenario where content is downloaded solely from the BS. 

In this paper, we consider a similar cellular network scenario as the one in \cite{Paa13,Ped16}, where content is stored in a number of mobile devices using an erasure correcting code. Mobile devices roam in an out of a cell according to a Poisson random process. However, as opposed to \cite{Paa13,Ped16}, where the download of a single file is considered, here we consider 
that users may request files, of different popularity, from a library of files. Our focus is on the delay of retrieving content from the network, which was not considered in \cite{Paa13,Ped16}. We derive analytical expressions for the download delay if content is stored in the mobile devices using MDS codes and show that MDS-coded distributed caching can significantly reduce the download delay with respect to the case where content is solely downloaded from the BS and the case where uncoded caching is used. The download delay of a single file was analyzed in~\cite{PiGr16}.

The remainder of the paper is organized as follows. The system model is introduced in Section~\ref{s:system_model}. The average download delay incurred when MDS-coded distributed caching is used is analyzed in Sections~\ref{s:download} and Section~\ref{s:D2D}. Section~\ref{sec:NumericalResults} presents and discusses numerical results and finally some conclusions are drawn in Section~\ref{sec:Conclusions}.

\textit{Notation.} The probability density function (pdf) of a random variable $X$ is denoted by $f_X(\cdot)$ and the  expectation with respect to $X$ is denoted by $\mathbb{E}_X\{\cdot\}$. Probability is denoted by $\Pr\{\cdot\}$ and $\onev_{i}$ represents the all-ones vector of length $i$. We denote by $\pi_m(\rho)$ the stationary distribution of an $\mathsf{M}/\mathsf{M}/\infty$ queueing system described by a Poisson birth-death process with arrival rate $\alpha$ and departure rate per node $\delta$, which is given by
\begin{equation}\nonumber
\pi_m(\rho)=\frac{{\rho}^m}{m !} e^{-\rho}\, ,
\end{equation}
where  $\rho=\alpha/\delta$.

%%%%%%%%%%%%%%%%%%% SYSTEM MODEL %%%%%%%%%%%%%%%%%%%%%%%%%%%%%%%%%%%%%%%%%%%%%%%%%%%%%%%%%%%%%
\section{System Model}\label{s:system_model}

We consider a single cell in a cellular network where $M$ mobile devices, referred to as nodes,  request files, each of size $B$ bits, from a library of $Z$ files. The files have different popularities and accordingly have a given probability to be requested.  Depending on the placement strategy, some files are encoded and stored into $n\leq M$ mobile devices,  referred to as storage nodes. For ease of language, the set of storage nodes is referred to as the DS network and nodes not storing any content are referred to as \emph{regular} nodes. A copy of each encoded file is also available at the BS serving the cell.  A node requesting a file attempts to retrieve it from the storage nodes using D2D communication, and, if the file cannot be completely retrieved from the DS network, the BS assists in providing the missing data. In order to increase the system efficiency, we allow multiple D2D communications to coexist if they are sufficiently far apart in space. Therefore, we divide the cell in $C$ virtual clusters  and assume that the size of the cluster and the transmit power are properly chosen such that only one D2D communication can be established between any two nodes in the cluster and the interference across different clusters can be neglected. A similar model is considered in~\cite{Gol14,Ji16}.

\textit{Data allocation and coding strategy.}
We adopt a deterministic allocation strategy, where the $F\leq Z$ most popular files are cached in a distributed fashion in $n$ storage nodes in the cell, according to the storage capacity of the devices. These files are partitioned into $k$ packets, called symbols, of $B/k$ bits each and are encoded into $n$ coded symbols using an $(n,k)$ MDS erasure correcting code of rate $r = k/n$. We use the same code for every file in order to simplify the analysis. 
We assume that each storage node stores a single symbol for each of the $F$ most popular files. Overall, $nF$ symbols are stored in $n$ storage nodes and no two storage nodes store the same symbol.
We model the popularity of the files in the library using the time-invariant Zipf distribution~\cite{Bre99}.\footnote{The popularity of the files in mobile data traffic does not change very rapidly, i.e., it can be considered constant during the day.} Accordingly, the probability that the $i$th file is requested is
\begin{equation}\label{e:Zipf}
z_i=\frac{1/i^{\sigma}}{\sum_{j=1}^Z 1/j^{\sigma}   } \, , \quad \qquad  1\leq i \leq Z\, ,
\end{equation}
where parameter $\sigma$ regulates the relative popularity of the files. In the following, the set of $F$ files stored in the cache of the mobile devices will be referred to as the DS library.

We assume that the mobile devices are free to move inside the cell. We consider a uniform spatial distribution of the nodes in the cell, and hence there are $\Mc=M/C$ devices per cluster on average and among them $\nc=n/C$ storage nodes. 
We focus on a single cluster in isolation, and  assume that the devices roam in and out of it. The arrival, departure and request model of the nodes are borrowed from \cite{Ped16}. The considered scenario is shown in Fig.~\ref{f:cluster}. 

\textit{Arrival-departure model.}
We assume that nodes arrive to the cluster according to a Poisson random process with exponential independent, identically distributed (i.i.d.) random inter-arrival times $\Ta$ with pdf
\begin{equation}\label{e:arrival}
f_{\Ta}(t)=\Mc\lambda e^{-\Mc\lambda t}, \qquad \lambda\geq 0, t\geq 0,
\end{equation}
where $\Mc\lambda$ is the expected arrival rate and $t$ is time, measured in time units (t.u.). The nodes stay in the cluster for an i.i.d. exponential random lifetime $T_ \ell$ with pdf
\begin{equation}\label{e:departure}
f_{T_\ell}(t)=\mu e^{-\mu t}, \qquad \mu\geq 0, t\geq 0,
\end{equation}
where $\mu$ is the expected departure rate per node. We assume that $\mu=\lambda$, which implies that the expected number of nodes in the cluster is $\Mc$. This model corresponds to an $\mathsf{M}/\mathsf{M}/\infty$ queuing model and the probability that there are $i$ nodes in the cluster is $\pi_i(\Mc)$.
The arrival of storage nodes to the cluster can also be described as a Poisson random process. In particular, the inter-arrival times $\Ts$ of the set of storage nodes has pdf
\begin{equation}\nonumber
f_{\Ta}(t)=\nc\lambda e^{-\nc\lambda t}, \qquad \lambda\geq 0, t\geq 0\,.
\end{equation}
The related lifetime is described by~(\ref{e:departure}) and the probability that there are $i$ storage nodes in the cluster is $\pi_i(\nc)$.\footnote{The Poisson model is largely used in the case of uniform mobility and its popularity is also due to its tractability. However, we would like to remark that while it is able to capture the mobility in one cluster, this model does not guarantee that the total number of storage nodes in the cell is constant and equal to $n$. More precisely, the only guarantee is that there are on average $\nc$ storage nodes per cluster, but there are no constraints on their instantaneous number, which can even exceed $n$. On the other hand, the probability of having a high number of storage devices in one cluster is generally very low. For $\nc=9$, we have $\pi_{18}( \nc)=3\cdot 10^{-3}$, $\pi_{27}(\nc)=6.6\cdot 10^{-7}$ and $\pi_{91}( \nc)=4\cdot 10^{-48}$. The same consideration holds for the total number of mobile devices.}
\begin{figure}[!t]
\centering{} \includegraphics[width=88mm]{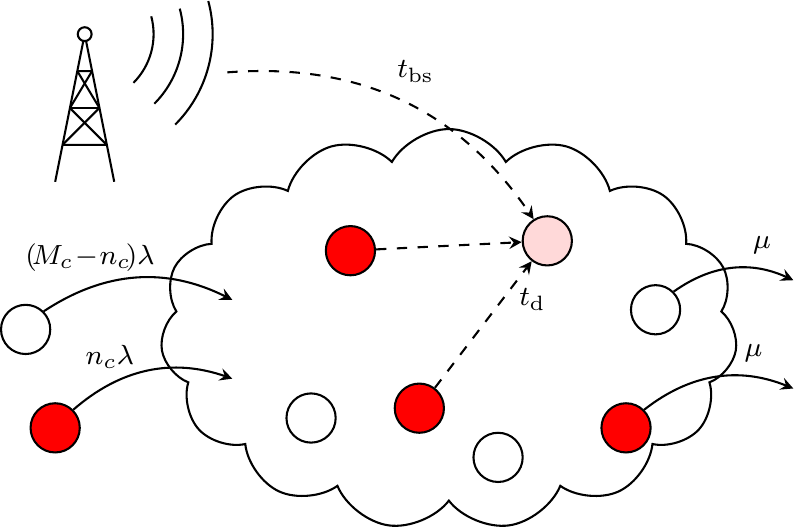}
\caption{An example of cluster where nodes roam in and out according to a Poisson random process: we have on average $\Mc$ mobile devices, and $\nc$ storage nodes among them (red circles), caching one different coded symbol for each of the most popular files. A device requesting a file (pink circle), must collect $k$ symbols. It attempts to recover them by using the DS network if the requesting file is stored in the devices. It uses the BS to collect the symbols that it is not able to download from the devices. The download of a symbol from a storage node takes $\tn$ t.u., and from the BS $\tbs$ t.u..}\label{f:cluster}
\end{figure}

\textit{DS network update.}
We assume that the nodes storing content that arrive to the cluster from neighboring clusters are not immediately available for download, but the BS serving the cell keeps track of them and periodically updates and broadcasts to all mobile devices the list of storage nodes in the cell every $\Delta$ t.u..
In the sequel, parameter $\Delta$ is referred to as the update interval and the set of storage nodes in the list broadcasted by the BS as the DS list.

\textit{Data delivery.}
Nodes request the file at random times with i.i.d. random inter-request time $T_r$ with pdf
\begin{equation}\label{e:request time}
f_{T_r}(t)=\omega e^{-\omega t}, \qquad \omega\geq 0, t\geq 0,
\end{equation}
where $\omega$ is the expected request rate per node. We focus on the download process. The node that requests a file attempts to retrieve it from the DS network using D2D communication. Thanks to the MDS property, an encoded file can be reconstructed by accessing any $k$ encoded symbols. If the file cannot be completely retrieved from the DS network, the BS assists in providing the missing coded symbols. The download of a coded symbol from a storage node incurs $\tn$ t.u. and from the BS $\tbs$ t.u.. We assume that $\tbs\gg \tn$ due to the congestion of the BS-to-node link and the fact that D2D communication occurs over a better channel due to the reduced distance between the involved nodes. We further assume that only one D2D link at a time can be established, and that the D2D communication does not interfere with the communication between the BS and the nodes. We say that the D2D network is \emph{idle} if there is no active D2D communication in the cluster. If the D2D network is not idle when one node requests the file, the whole file is downloaded from the BS. Moreover, to simplify the analysis, we assume that multiple BS-to-node links can coexist.

%%%%%%%%%%%%%%%%%%%%%%%%%%%%%%%%%%%%%%%%%%%%%%%%%%%%%%%%%%%%%%%%%%%%%%%%%%%%%%%%%%%%%%%%%%%%%%%%%

\section{File Average Download Delay}\label{s:download}

We investigate the average time that is required to retrieve one file from the wireless network, referred to as the download delay. If a requested file is stored in the DS library, the requesting node attempts to retrieve it from the DS network using D2D communication, otherwise the file is entirely downloaded from the BS. Therefore, we introduce the binary random variable (RV) $H\in\{0,1\}$  which describes the hitting of the DS cache, i.e., $H=1$ if a file of the DS library is requested and $H=0$ otherwise. Moreover, the D2D network can be used only if it is idle, i.e., if there are no active D2D communications. Accordingly, we introduce the binary RV $I\in\{0,1\}$ that describes the status of the D2D network. $I=1$ if the network is idle and $I=0$ otherwise. If the D2D network is idle, the requesting node tries to collect the necessary coded symbols from the nodes of the DS list provided by the BS using D2D communication. If the requesting node is a storage node of the DS list, it needs to download $k-1$ symbols, otherwise $k$ symbols must be downloaded.  We thus introduce the binary RV $R\in\{0,1\}$, which represents the type of request, i.e., $R=1$ for requests originating from a node that belongs to the DS list and $R=0$ for the other requests. The download from the storage nodes can be either fully successful or partially accomplished, in which case the requesting node turns to the BS to recover the missing symbols. 
On the other hand, if the D2D network is not idle and the requested file is stored in the DS library, the node downloads $k$ or $k-1$ symbols from the BS, depending on the type of node.
% if it is a regular or a storage node, respectively.  

From the discussion above, the average file download delay, $\TDW$, may be formalized as

\begin{Proposition} The average file download for the cellular network described in Section~\ref{s:system_model} where the $F$ most popular files are stored in the mobile devices using an $(n,k)$ MDS code is
\begin{align}
\TDW= & \Pr\{H=0\} k  \tbs  + \Pr\{I=1\} \Pr\{H=1\} \Big(\TDD + (k-\Pr\{R=1\}-\eta) \tbs \Big) \nonumber\\
&+ \Pr\{I=0\} \Pr\{H=1\} (k-\Pr\{R=1\}) \tbs \label{e:T} \, ,
\end{align}
where $\eta$ is the average number of coded symbols downloaded per request using D2D communication and $\TDD$, referred to as the average D2D download delay, is the corresponding delay.
\end{Proposition}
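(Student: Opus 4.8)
The plan is to write the random download delay $D$ and obtain $\TDW=\mathbb{E}\{D\}$ through the law of total expectation, conditioning on the three binary variables $H$, $I$, and $R$ introduced above. First I would decompose $\TDW=\sum_{h,i\in\{0,1\}}\Pr\{H=h,I=i\}\,\mathbb{E}\{D\mid H=h,I=i\}$ and assume that the cache-hit event $H$ is independent of the idle event $I$, so that $\Pr\{H=h,I=i\}=\Pr\{H=h\}\Pr\{I=i\}$. This factorization is precisely what produces the products $\Pr\{I=1\}\Pr\{H=1\}$ and $\Pr\{I=0\}\Pr\{H=1\}$ appearing in~\eqref{e:T}.

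Next I would evaluate the conditional mean delay in each operating regime. When $H=0$ the requested file is not in the DS library, so all $k$ symbols must come from the BS at cost $k\tbs$, regardless of the D2D state; summing the $I=0$ and $I=1$ contributions collapses the $I$ factor and yields the single term $\Pr\{H=0\}k\tbs$. When $H=1$ and $I=1$ the node attempts the D2D download: by the MDS property it needs $k-R$ symbols in total (one fewer when $R=1$, since a DS-list node already caches a symbol of the requested file), of which on average $\eta$ are retrieved by D2D at average delay $\TDD$, leaving on average $k-\Pr\{R=1\}-\eta$ symbols to be fetched from the BS. Since $R$ enters the symbol count linearly, its conditional expectation is replaced by $\Pr\{R=1\}$, giving the bracketed term $\TDD+(k-\Pr\{R=1\}-\eta)\tbs$. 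Finally, when $H=1$ and $I=0$ the busy D2D network forces the entire residual file, again $k-R$ symbols, to be downloaded from the BS, contributing $(k-\Pr\{R=1\})\tbs$ on average.

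Weighting each regime by its probability and summing reproduces~\eqref{e:T}. The step I expect to be the main obstacle is the bookkeeping in the $H=1,I=1$ case: one must justify that the average number of BS symbols equals the total demand minus the D2D yield, and that taking expectations commutes with this difference. This in turn rests on $\eta$ and $\TDD$ being interpreted as conditional averages given $\{H=1,I=1\}$, and on $R$ being (conditionally) independent of the idle status, so that the same $\Pr\{R=1\}$ can be used in both $I$-branches. I would therefore state these independence assumptions explicitly, since they are the only nontrivial ingredients; everything else follows from the additive structure of the delay.
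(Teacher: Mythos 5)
Your proposal is correct and takes essentially the same route as the paper: the Proposition there is presented without a formal proof, as a formalization of the case analysis in the preceding discussion (cache miss yields $k\tbs$; cache hit with idle D2D network yields the D2D attempt plus the residual $k-\Pr\{R=1\}-\eta$ symbols from the BS; cache hit with busy D2D network yields $k-\Pr\{R=1\}$ symbols from the BS), which is exactly the decomposition you carry out. Your contribution is merely to make explicit the law-of-total-expectation bookkeeping, the linearity argument replacing $R$ by $\Pr\{R=1\}$, and the independence assumptions ($H$ of $I$, and $R$ of $I$; $\eta$, $\TDD$ as conditional averages given a cache hit and an idle network) that the paper leaves implicit in writing the products $\Pr\{I=1\}\Pr\{H=1\}$ and $\Pr\{I=0\}\Pr\{H=1\}$.
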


The computation of $\eta$, $\TDD$ and $\Pr\{R=1\}$ is addressed in Section~\ref{s:D2D}. 
The probability of hitting the cache can be expressed as
\begin{equation}\nonumber
\Pr\{ H=1\} =\sum_{i=1}^F z_i\, ,
\end{equation}
where the probabilities $z_i$ are given in~(\ref{e:Zipf}). It follows that $\Pr\{ H=1\} =1$ if $F=Z$.

The next step is the computation of the probability that the D2D network is idle.  Let $I^{(\ell)}$ be the status of the network at the time of the $\ell$th request. It follows
\begin{equation}\label{e:idle}
\Pr\{ I=1\} =\lim_{L \to \infty} \frac{1}{L} \sum_{\ell=1}^L \Pr\{ I^{(\ell)}=1\}.
\end{equation}
In order to compute $\Pr\{ I^{(\ell)}=1\}$, we introduce the RV $W^{(j)}$ that denotes the time instant of the $j$th request. Also, let $T^{(j)}$ be the time during which the D2D network is occupied by the $j$th request. The D2D network is idle at the time of the  $\ell$th request if none of the previous requests is still using D2D communication. Therefore, $\Pr\{ I^{(1)}=1\}=1$ and 
\begin{equation}\label{e:P_I exact}
\Pr\{ I^{(\ell)}=1\}\!=\!\prod_{i <\ell} \! \Pr\{  W^{(\ell)} \!\!>  \!\!W^{(\ell-i)} \!+ T^{(\ell-i)}  \}, \,\ell>1 .
\end{equation}
Assuming that if the D2D network is not idle at time $W^{(\ell)} $ is because of the $(\ell-1)$th request, the product in (\ref{e:P_I exact}) reduces to the term involving the $(\ell-1)$th request only, i.e.,
\begin{align}\label{e:P_I approx1}
\Pr\{ I^{(\ell)}=1\}&\simeq \Pr\{  W^{(\ell)} >  W^{(\ell-1)} + T^{(\ell-1)}  \} \\
&= \int_0^\infty \Pr\{  W^{(\ell)} >  W^{(\ell-1)} + t  \} f_{T^{(\ell-1)} }(t) d t\, .\nonumber
\end{align} 
Since the requests are i.i.d. with inter-request time distributed as in (\ref{e:request time}) and on average there are $\Mc$ nodes in the cluster,  we can compute
\begin{equation}\nonumber
\Pr\{  W^{(\ell)} >  W^{(\ell-1)}+ t  \} = e^{-\omega \Mc t}\, ,\quad t\geqslant 0,\quad\ell>1\, ,
\end{equation}
and (\ref{e:P_I approx1}) can be written as 
\begin{equation}\nonumber
\Pr\{ I^{(\ell)}=1\}\simeq \mathbb{E}_{T^{(\ell-1)}} \{  e^{-\omega \Mc T^{(\ell-1)}}  \} ,\quad\ell>1,
\end{equation}
If $\omega T^{(\ell-1)} \ll 1$,
\begin{equation}\label{e:P_I approx2}
e^{-\omega \Mc T^{(\ell-1)}} \simeq 1-\omega \Mc T^{(\ell-1)}
\end{equation}
and
\begin{align}
\Pr\{ I^{(\ell)}=1\} & \simeq  \mathbb{E}_{T^{(\ell-1)}} \{  e^{-\omega \Mc T^{(\ell-1)}}  \}    \nonumber\\
                           & \simeq  \mathbb{E}_{T^{(\ell-1)}} \{   1-  \omega \Mc T^{(\ell-1)}   \} \nonumber\\
                           & = 1-\omega \Mc \Pr\{ I^{(\ell-1)}=1\} \Pr\{ H=1\}  \TDD .\label{e:idle_ell}
\end{align}
In~(\ref{e:idle_ell}), we used the fact that  the probability of hitting the cache and the average D2D download delay are independent of the request index (if $\ell$ is sufficiently large), as it is proven in Lemma~\ref{l:Tdd_ell} in Section~\ref{s:D2D}. Substituting (\ref{e:idle_ell}) in (\ref{e:idle}) and after some simple calculations, we obtain
\begin{equation}\label{e:approxI}
\Pr\{ I=1\} \simeq \frac{1}{1+\omega \Mc   \Pr\{ H=1\}  \TDD} \, .
\end{equation} 

%%%%%%%%%%%%%%%%%%%%%%%%%%%%%%%%%%%%%%%%%%%%%%%%%%%%%%%%%%%%%%%%%%%%%%%%%%%%%%%%%%%%%%%%%%%%%%%%

\section{Download From Storage Nodes}\label{s:D2D}

In this section, we consider the computation of the average D2D download delay $\TDD$ and the average number of coded symbols $\eta$ downloaded per request using D2D communication. We assume that a node cannot download in parallel from multiple nodes, but it serially tries to download the coded file symbols from the nodes in the DS list. When a node requests the file, if the D2D network is idle and the requested file belongs to the DS library, it randomly chooses one of the storage nodes from the list supplied by the BS. After each downloaded symbol, the requesting node randomly chooses another storage node from the DS list and still alive.\footnote{The requesting node uses the storage nodes alive at the moment of its request even if, during the download process, new storage nodes are included in the DS list after the periodic restoration.} 
We assume that a requesting node that has collected fewer than the $k$ symbols necessary to reconstruct the file turns to the BS when all the reference storage nodes  left or when the download of a symbol fails, even if other storage nodes are available. To simplify the analysis, we assume that both cases (the failed symbol download and the absence of storage nodes) incur $\tn$ t.u., even if the node could contact the BS earlier. We also assume that the download from the D2D network fails if the requesting node itself leaves the cluster before collecting $k$ symbols. In this case, the download is also completed from the BS. 

To derive the average D2D download delay, we introduce three RVs describing the number of nodes of different type that are present in the cluster at the instant of a request: the number of storage nodes of the DS list, the total number of storage nodes (belonging or not to the list, the latter corresponding to the storage nodes that arrive to the cluster after the DS list update and that have not left the cluster at the time of the request), and the number of regular nodes. In particular, we denote by $X_1\in\{0,\dots,\infty\}$ the RV that describes the number of storage nodes of the DS list when a request arrives. We describe by the RVs $Q\in\{0,\dots,\infty\}$ and $V\in\{0,\dots,\infty\}$ the total number of storage nodes and the number of regular nodes at the instant of a request, respectively. Moreover, we denote by $Y\in\{0,\dots,\infty\}$  the RV that represents the total number of storage nodes (belonging or not to the DS list) at the beginning of the update interval of length $\Delta$. In the following three lemmas, we give a probabilistic description of the above RVs. 

%%%%%%%%%%%%%%%%%%%%%LEMMA%%%%%%%%%%%%%%%%%%%%%%%%
\begin{Lemma}\label{l:px1}
The probability that there are $x\geq0$ storage nodes of the DS list at the time of a request is
\begin{equation}\label{e:px1}
\Pr\{X_1=x \}= \frac{\sum_{y=0}^\infty  \pi_y(\nc) \sum_{m=y}^\infty (1-e^{-m\omega \Delta}) \pi_{m-y}(\Mc-\nc)  \Pr\{X_1=x|Y=y\} }{  \sum_{m=1}^\infty (1-e^{-m\omega \Delta}) \pi_m(\Mc)  }\, ,
\end{equation}
where $\Pr\{X_1=x|Y=y\}$ is the probability that $ X_1 $ is equal to $x$, given that $y\geq0$ storage nodes are in the cluster at the beginning of the update interval of length $\Delta$, and is 
\begin{equation}
\Pr\{X_1=x|Y=y\}=\frac{1}{\Delta} \sum_{i'=x}^y \frac{1-p_{i'}}{\mu_{i'}} \prod_{{\substack{ j=x \\ j\neq i' }}  }^y \frac{j}{j-i'}- \frac{1}{\Delta} \sum_{i'=x + 1}^y \frac{1-p_{i'}}{\mu_{i'}}  \prod_{{\substack{ j=x + 1 \\ j\neq i' }} }^y \frac{j}{j-i'}\, ,\label{e:px_J}
\end{equation}
where $\mu_{i'}=i'\mu$ and $p_{i'}=e^{-\mu_{i'}\Delta}$.
\end{Lemma}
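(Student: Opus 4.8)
The plan is to condition on $Y=y$ and track the fate of the DS-list nodes through one update interval. At the instant of the broadcast the BS lists exactly the $y$ storage nodes then present, and by the model these are the only nodes the requesting user may contact (arrivals during the interval are not added until the next restoration). Each of these $y$ nodes leaves the cluster independently after an exponential lifetime of rate $\mu$, cf.~(\ref{e:departure}), so the number $N(t)$ of DS-list nodes still alive $t$ t.u. after the update is a pure death process with $N(0)=y$ and state-dependent departure rate $\mu_{i'}=i'\mu$ in state $i'$. Taking the request to fall at a time uniformly distributed over the interval $[0,\Delta]$, we have $\Pr\{X_1=x\mid Y=y\}=\tfrac1\Delta\int_0^\Delta \Pr\{N(t)=x\mid N(0)=y\}\,dt$, so the proof reduces to the transient law of $N(t)$ followed by one integration.

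First I would compute the survival probabilities $P_{\ge x}(t)=\Pr\{N(t)\ge x\mid N(0)=y\}$. The first instant at which the process drops below $x$, i.e.\ the time to pass from $y$ down to $x-1$, is the sum of the independent sojourn times in states $y,y-1,\dots,x$, which are exponential with the \emph{distinct} rates $\mu_y,\dots,\mu_x$. Hence $N(t)\ge x$ precisely when this hypoexponential sum exceeds $t$, and the standard partial-fraction form of the hypoexponential tail with distinct rates gives
\[
P_{\ge x}(t)=\sum_{i'=x}^{y}\left(\prod_{\substack{j=x\\ j\neq i'}}^{y}\frac{\mu_j}{\mu_j-\mu_{i'}}\right)e^{-\mu_{i'}t}=\sum_{i'=x}^{y}\left(\prod_{\substack{j=x\\ j\neq i'}}^{y}\frac{j}{j-i'}\right)e^{-\mu_{i'}t},
\]
where the last equality uses $\mu_j=j\mu$. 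I would then write $\Pr\{N(t)=x\mid N(0)=y\}=P_{\ge x}(t)-P_{\ge x+1}(t)$, which reproduces exactly the two sums in the claim: the first with the product running from $j=x$ and $i'$ from $x$ to $y$, the second with the product running from $j=x+1$ and $i'$ from $x+1$ to $y$. Averaging term by term over the interval and using $\tfrac1\Delta\int_0^\Delta e^{-\mu_{i'}t}\,dt=\tfrac1\Delta\,\tfrac{1-p_{i'}}{\mu_{i'}}$ with $p_{i'}=e^{-\mu_{i'}\Delta}$ yields~(\ref{e:px_J}) directly.

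As a cross-check, and as an alternative route that sidesteps the hypoexponential identity, one may write the transient law explicitly as the binomial thinning $\Pr\{N(t)=x\mid N(0)=y\}=\binom{y}{x}e^{-\mu x t}(1-e^{-\mu t})^{y-x}$, expand $(1-e^{-\mu t})^{y-x}$ by the binomial theorem into exponentials $e^{-\mu_{i'}t}$ for $i'=x,\dots,y$, and integrate; a short computation shows that the resulting coefficient $\binom{y}{x}\binom{y-x}{i'-x}(-1)^{i'-x}$ coincides with the difference of the two products above.

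The genuinely delicate point is the justification of the uniform placement of the request inside the update interval, which is what supplies the factor $\tfrac1\Delta$ and the integration over $[0,\Delta]$; once that modeling step is granted, everything else is transient-Markov-chain analysis. The remaining obstacle is then purely the bookkeeping that recasts the spectral (partial-fraction) coefficients, or equivalently the binomial coefficients, into the compact difference-of-products form appearing in~(\ref{e:px_J}).
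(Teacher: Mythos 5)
Your derivation of the conditional law~(\ref{e:px_J}) is essentially correct and in fact more self-contained than the paper's treatment of that part, which simply cites~\cite{Ped16}: you correctly model the DS-list population after the broadcast as a pure death process with state-dependent rates $\mu_{i'}=i'\mu$, obtain $\Pr\{N(t)\ge x\}$ from the hypoexponential tail with distinct rates (or, equivalently, from binomial thinning of $y$ independent exponential lifetimes), difference to get the pmf, and integrate the request epoch uniformly over $[0,\Delta]$ using $\tfrac1\Delta\int_0^\Delta e^{-\mu_{i'}t}\,dt=\tfrac{1-p_{i'}}{\mu_{i'}\Delta}$. This is the same underlying picture the paper relies on.

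However, there is a genuine gap: you never derive the main equation of the lemma, namely~(\ref{e:px1}). Having $\Pr\{X_1=x\mid Y=y\}$ in hand, one still needs the distribution of $Y$, the number of storage nodes at the start of the update interval \emph{in which a request falls}, and this is \emph{not} the stationary law $\pi_y(\nc)$: intervals that contain a request are statistically biased toward states with many nodes present, because the aggregate request rate is proportional to the number of nodes in the cluster. The paper handles this by Bayes' rule. Writing $\tilde Y$ for the unconditioned number of storage nodes at the start of an interval, so that $\Pr\{\tilde Y=y\}=\pi_y(\nc)$, it computes
\[
\Pr\{Y=y\}=\frac{\Pr\{\text{req. in interval}\mid \tilde Y=y\}\,\pi_y(\nc)}{\Pr\{\text{req. in interval}\}}
=\frac{\sum_{m=y}^{\infty}\bigl(1-e^{-m\omega\Delta}\bigr)\,\pi_{m-y}(\Mc-\nc)\,\pi_y(\nc)}{\sum_{m=1}^{\infty}\bigl(1-e^{-m\omega\Delta}\bigr)\,\pi_m(\Mc)}\,,
\]
where $1-e^{-m\omega\Delta}$ is the probability of at least one request in an interval containing $m$ nodes and $\pi_{m-y}(\Mc-\nc)$ accounts for the regular nodes accompanying the $y$ storage nodes. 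It is exactly this size-biasing that produces the numerator weights and the denominator in~(\ref{e:px1}); with your argument alone one would conclude $\Pr\{X_1=x\}=\sum_y \pi_y(\nc)\Pr\{X_1=x\mid Y=y\}$, which is a different (and incorrect) expression. A secondary omission, of the same nature: the lemma concerns a typical request in steady state, so the per-request quantities $\Pr\{X_1^{(\ell)}=x\}$ must be shown independent of $\ell$ (for large $\ell$) before the Ces\`aro average over requests collapses to a single formula; the paper does this explicitly, and your uniform-placement step is really one ingredient of that argument rather than a substitute for it.
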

\begin{proof}
The proof is given in Appendix~\ref{app:ProofLem1}.
\end{proof}

%%%%%%%%%%%%%%%%%%%%%LEMMA%%%%%%%%%%%%%%%%%%%%%%%%

\begin{Lemma}\label{l:pq}
The probability that there are $q\geq0$ storage nodes in the cluster at the time of a request is given by
\begin{equation}\label{e:pq}
\Pr\{Q=q \}= \frac{\sum_{m=q}^\infty   (1-e^{-m\omega \Delta}) \pi_{m-q}(\Mc-\nc)  }{  \sum_{m=1}^\infty (1-e^{-m\omega \Delta}) \pi_m(\Mc)  } \pi_q(\nc) \, .
\end{equation}
\end{Lemma}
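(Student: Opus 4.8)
The plan is to obtain the law of $Q$ by decoupling two effects: the stationary law of how the population splits into storage and regular nodes, and the size-biasing introduced by conditioning on the instant of a request. First I would record that storage nodes and regular nodes evolve as two \emph{independent} $\mathsf{M}/\mathsf{M}/\infty$ systems: the storage stream is a Poisson arrival process of rate $\nc\lambda$ obtained by thinning the total stream of rate $\Mc\lambda$, the complementary regular stream has rate $(\Mc-\nc)\lambda$, and both have departure rate $\mu=\lambda$. Consequently, at any fixed time in steady state the storage count is distributed as $\pi_q(\nc)$, the regular count as $\pi_{m-q}(\Mc-\nc)$, and the two are independent, so the joint probability of finding $q$ storage nodes together with $m-q$ regular nodes (hence $m$ nodes in total) is $\pi_q(\nc)\,\pi_{m-q}(\Mc-\nc)$; summing this over the split recovers $\pi_m(\Mc)$ through the convolution identity for Poisson laws. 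I would emphasise this independence, because it is exactly what makes $Q$ far simpler than $X_1$ in Lemma~\ref{l:px1}: the event $\{Q=q\}$ depends only on the current population and not on the departure history inside the update interval, so no analogue of the conditional law~(\ref{e:px_J}) is needed.

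The second ingredient is the conditioning on ``the instant of a request.'' Treating the population as fixed at its value $m$ over an update interval of length $\Delta$, the requests form a Poisson process of rate $m\omega$, so the probability that such an interval carries a request is $1-e^{-m\omega\Delta}$. Observing the system at a request therefore size-biases the stationary population by this factor, and Bayes' rule gives
\[
\Pr\{Q=q\}=\frac{\sum_{m\geq q}(1-e^{-m\omega\Delta})\,\pi_q(\nc)\,\pi_{m-q}(\Mc-\nc)}{\sum_{m\geq 1}(1-e^{-m\omega\Delta})\,\pi_m(\Mc)},
\]
after which pulling the $m$-independent factor $\pi_q(\nc)$ out of the numerator sum yields~(\ref{e:pq}) verbatim. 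The lower limit $m\geq q$ in the numerator reflects that at least $q$ nodes are needed to host $q$ storage nodes, while the normalising sum may start at $m=1$ since the $m=0$ term vanishes ($1-e^{0}=0$), so the stated range is immaterial.

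I expect the only delicate point to be the justification of the weight $1-e^{-m\omega\Delta}$, that is, the claim that conditioning on a request is the same as size-biasing the stationary population by the probability that an update interval contains a request. This rests on the modelling assumption---inherited from the update mechanism of Section~\ref{s:system_model} and from~\cite{Ped16}---that the population is taken constant over an interval and that each request-bearing interval contributes a single observation (equivalently, that one focuses on the first request after each list update), rather than weighting by the number of requests, which would instead produce a factor proportional to $m$. With that convention fixed, the remaining steps are the independence of the two Poisson populations and the elementary extraction of $\pi_q(\nc)$ from the sum, both of which are routine.
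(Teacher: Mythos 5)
Your proposal is correct and follows essentially the same route as the paper: the paper proves Lemma~\ref{l:pq} "along the same lines" as Lemma~\ref{l:px1}, whose proof in Appendix~\ref{app:ProofLem1} is precisely your Bayes'-rule argument --- stationary prior $\pi_q(\nc)$, independent regular-node count giving the likelihood $\sum_{m\geq q}(1-e^{-m\omega\Delta})\pi_{m-q}(\Mc-\nc)$, and normalization $\sum_{m\geq 1}(1-e^{-m\omega\Delta})\pi_m(\Mc)$. You also correctly identify the simplification relative to Lemma~\ref{l:px1} (no death-process conditional law~(\ref{e:px_J}) is needed since $Q$ counts all storage nodes, not just surviving DS-list members), and your caveat about the $1-e^{-m\omega\Delta}$ weighting matches the modeling convention the paper inherits from~\cite{Ped16}.
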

\begin{proof}
The proof follows the same lines as the proof of Lemma~\ref{l:px1}.
\end{proof}
%%%%%%%%%%%%%%%%%%LEMMA%%%%%%%%%%%%%%%%%%%%%%%%%%%

\begin{Lemma}\label{l:pv}
The probability that there are $v\geq0$ regular nodes in the cluster at the time of a request is given by
\begin{equation}
\Pr\{V=v \}= \frac{\sum_{m=v}^\infty   (1-e^{-m\omega \Delta}) \pi_{m-v}(\nc)  }{  \sum_{m=1}^\infty (1-e^{-m\omega \Delta}) \pi_m(\Mc)  } \pi_v(\Mc-\nc) \, .
\end{equation}
\end{Lemma}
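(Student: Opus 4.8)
The plan is to mirror the argument used for Lemma~\ref{l:pq}, interchanging the roles of the storage nodes and the regular nodes. The starting observation is that, since storage nodes and regular nodes arrive according to two independent Poisson processes and depart independently, the two associated $\mathsf{M}/\mathsf{M}/\infty$ systems are independent: at a fixed time the number of regular nodes is distributed as $\pi_v(\Mc-\nc)$, the number of storage nodes as $\pi_q(\nc)$, and the two counts are independent. Consequently the joint probability of having $v$ regular nodes together with a total of $m$ nodes (hence $m-v$ storage nodes) factorizes as $\pi_v(\Mc-\nc)\,\pi_{m-v}(\nc)$, and summing this over $v$ returns $\pi_m(\Mc)$ by the convolution property of the Poisson distribution, which is a convenient consistency check on the final expression.

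Next I would account for the fact that the count is observed \emph{at the time of a request} rather than at a uniformly random instant, which size-biases the stationary distribution. Following the same lines as Lemma~\ref{l:px1}, I would condition on the configuration over an update interval of length $\Delta$: given a total of $m$ nodes the aggregate request process is Poisson with rate $m\omega$, so the probability that at least one request falls in the interval is $1-e^{-m\omega\Delta}$. This weight depends on the configuration only through the total number of nodes $m$, which is exactly why the same factor appears in all three lemmas and why the normalizing denominator $\sum_{m=1}^\infty(1-e^{-m\omega\Delta})\pi_m(\Mc)$, namely the overall probability that a request occurs, is common to Lemmas~\ref{l:px1}, \ref{l:pq} and the present one.

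With these ingredients the result follows from Bayes' rule: the probability that a request sees $v$ regular nodes is proportional to the probability of the event that $v$ regular nodes are present \emph{and} a request occurs, obtained by weighting the joint law $\pi_v(\Mc-\nc)\,\pi_{m-v}(\nc)$ by $1-e^{-m\omega\Delta}$ and summing over all admissible totals $m\geq v$; dividing by the common denominator then yields the stated expression. Because the regular-node count plays here precisely the role that the storage-node count plays in Lemma~\ref{l:pq}, the algebra is identical up to the substitution $\nc\leftrightarrow\Mc-\nc$, and no fresh computation is needed. I expect the only genuinely delicate point to be the justification of the request-time weighting, i.e.\ charging each start-of-interval configuration with the factor $1-e^{-m\omega\Delta}$ in place of tracking the exact state at the request epoch. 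This is the same approximation that underlies Lemma~\ref{l:px1}, so I would import it verbatim rather than re-derive it; once it is granted, establishing the lemma is merely a relabelling of the two node populations.
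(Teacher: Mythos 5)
Your proposal is correct and follows essentially the same route as the paper: the paper proves Lemma~\ref{l:pv} by invoking the argument of Lemma~\ref{l:px1}, namely Bayes' rule with the stationary law $\pi_v(\Mc-\nc)$ of the regular-node count, the request-occurrence weight $1-e^{-m\omega\Delta}$ obtained by summing over the independent count of the other population, and the common normalizer $\sum_{m\geq 1}(1-e^{-m\omega\Delta})\pi_m(\Mc)$, which is exactly the structure you describe, with the roles of $\nc$ and $\Mc-\nc$ interchanged relative to Lemma~\ref{l:pq}. Your identification of the start-of-interval weighting as the one point imported from Lemma~\ref{l:px1} rather than re-derived matches the paper's treatment as well.
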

\begin{proof}
The proof follows the same lines as the proof of Lemma~\ref{l:px1}.
\end{proof}

%%%%%%%%%%%%%%%%%%%%%%%%%%%%%%%%%%%%%%%%%%%%%
Based on the above lemmas, we can compute the probability that the request originates from a storage node of the DS list and the probability of having a given number of storage nodes in the DS list at the time of the request conditioned to the type of request.

Using Bayes' rule, the probability that there are $x\geq 0$ storage node of the DS list alive at the time of a request, conditioned to the type of request, is given by
\begin{equation}\label{e:px1cond}
\Pr\{X_1=x | R=i \}= \Pr\{ R=i|  X_1=x \} \frac{ \Pr\{ X_1=x \}} {  \Pr\{ R=i\} }\,,  i=0,1\, .
%\Pr\{X_1=x | R=1 \}&= \Pr\{ R=1|  X_1=x \} \frac{ \Pr\{ X_1=x \}} {  \Pr\{ R=1\} }\, .
\end{equation}
The probability $\Pr\{ X_1=x \}$ is given in Lemma~\ref{l:px1}. We now compute $\Pr\{ R|  X_1 \}$ and $\Pr\{ R \}$. We start with the probability of having one request from the DS list conditioned to the number of storage nodes in the DS list at the time of the request. For $x>0$, it can be written as
\begin{equation}
 \Pr\{ R=1|  X_1=x \} = \sum_{v=0}^\infty \sum_{q=x}^\infty \frac{x}{q+v}\Pr\{V=v|Q=q,X=x\} \Pr\{Q=q|X=x\} \,.
\end{equation}
Clearly, the condition $X=0$ implies that the request cannot originate from a storage node of the DS list, therefore $\Pr\{  R=1 | X_1=0\}  = 0$. We approximate the probability that there are $q$ storage nodes at the instant of the request, given the number of nodes of the DS list alive, by using the steady state probability of a Poisson birth-death process with arrival rate $\lambda (\mathbb{E}_{Q}(Q)-\mathbb{E}_{X_1}(X_1))$ and departure rate $\lambda$. In particular, we compute
\begin{equation}\label{e:pq_approx}
\Pr\{Q=q|X=x \} \simeq \pi_{q-x}(\mathbb{E}_{Q}(Q)-\mathbb{E}_{X_1}(X_1))\,
\end{equation}
where the expectations $\mathbb{E}_{Q}(Q)$ and $\mathbb{E}_{X_1}(X_1)$ are obtained starting from the probabilities~(\ref{e:pq}) and (\ref{e:px1}), respectively.  The number of regular nodes $V$ is independent of the number of storage nodes at the instant of the request, therefore we finally have
\begin{equation}
 \Pr\{ R=1|  X_1=x \} =\sum_{q=x}^\infty  \sum_{v=0}^\infty \frac{x}{q+v}\Pr\{V=v\} \pi_{q-x}(\mathbb{E}_{Q}(Q)-\mathbb{E}_{X_1}(X_1)) \,,
\end{equation}
where $\Pr\{V=v\}$ is given in Lemma~\ref{l:pv}. Note that in the expression above, with some abuse of notation, we used equal sign to avoid carrying all the way the approximation sign due to the approximation introduced in (\ref{e:pq_approx}). 
Starting from this result, we compute the probability that the request originates from the DS list as
\begin{equation}
\Pr\{R=1\}=\sum_{x=0}^\infty \sum_{q=x}^\infty \sum_{v=0}^\infty  \frac{x}{q+v}\Pr\{V=v\} \pi_{q-x}(\mathbb{E}_{Q}(Q)-\mathbb{E}_{X_1}(X_1))  \Pr\{X_1=x \}\, ,
\end{equation}
where $\Pr\{X_1=x\}$ is given in Lemma~\ref{l:px1}. The probability $\Pr\{ R=0|  X_1=x \}$ is easily computed as $1-\Pr\{ R=1|  X_1=x \}$. Similarly, we have $\Pr\{R=0\}=1-\Pr\{R=1\}$.
Following the same approach for the proof of Lemma~\ref{l:px1}, it is easy to show that $\Pr\{X_1^{(\ell)}=x|R^{(\ell)}=i  \}$ and $\Pr\{R^{(\ell)}=i\}$ are independent of the specific request (when $\ell$ grows large), where $R^{(\ell)}$ is the binary RV describing the type of the $\ell$th request.

The case $\Delta=0$ represents the case of instantaneous update, where the nodes contact directly the BS when they request a file and receive the list of the storage nodes through a dedicated link. For instantaneous update, the number of	 storage nodes at the instant of the request and the type of request is described by the following probabilities
\begin{align}
&\Pr\{X=x\}=\pi_x(\nc)\, ,\\
&\Pr\{R=1\}=\frac{\nc}{\Mc}\, ,\\
&\Pr\{R=1|X=x\}=\sum_{m=x}^\infty \frac{x}{m}\pi_{m-x}(\Mc-\nc)\, .
\end{align}
The probability that there are $x\geq0$ storage nodes at the time of the request given the type of request can be computed by replacing the above probabilities in~(\ref{e:px1cond}).

%%%%%%%%%%%%%%%%%%%%%%%%%%%%%%%%%%%%%%%%%%%%%

In order to describe the D2D download, let $S_1$ be the binary RV that describes the success of the download at the first attempt. More precisely, $S_1=1$ represents the successful download of the coded symbol from the first contacted storage node. If the download is not successful from the first contacted storage node, $S_1=0$. Similarly, we denote by $S_j$ the binary RV describing the download at the $j$th attempt and we denote by $\Sv_{[i]}$, $i\ge 1$ the random vector  ($S_1,..., S_i$). In the following, in Lemmas~\ref{l:S1}, \ref{l:Sk}, and \ref{l:Sj}, we derive the probability that no symbols can be downloaded from the D2D network, $ \Pr \{S_1=0| R\}$, the probability that the content is fully recovered from the DS network, $\Pr\{ \Sv_{[k-i]}=\onev_{k-i} | R\} $, and the probability that it is only partially recovered, $\Pr \{\Sv_{[j]}=\onev_{j}, S_{j+1}=0 | R\}$, respectively.

\begin{Lemma}\label{l:S1}
The probability that no symbols are downloaded through D2D communication, conditioned to the type of request, is given by
\begin{equation}\nonumber
\Pr\{ S_1=0|R=i\} =  1 + e^{-\mu \tn}  \Big(  \Pr\{X_1=i|R=i\} + \sum_{g =1}^\infty \sum_{d =0 }^{g} \frac{d}{g}    \Pr\{X_1=g+i|R=i\} \theta(d,g) - 1\Big), i=0,1.
\end{equation}
where $\Pr\{X_1|R\} $ is given in~(\ref{e:px1cond}) and 
\begin{equation}\label{e:theta}
\theta(d,g)=\sum_{i'=g-d}^x e^{-\mu_{i'} \tn} \prod_{{\substack{ j=g-d \\ j\neq i' }} }^x \frac{j}{j-i'}- \sum_{i'=g-d+1}^x   e^{-\mu_{i'} \tn}  \prod_{{\substack{ j=g-d+1 \\ j\neq i' }} }^x \frac{j}{j-i'}\,, \quad d\geq 0, g\geq 0
\end{equation}
with  $\mu_{i'}=i'\mu$. 
 \end{Lemma}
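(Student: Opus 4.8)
The plan is to compute $\Pr\{S_1=0|R=i\}$ by the law of total probability over the number $X_1$ of alive storage nodes of the DS list present at the instant of the request, whose conditional law given the request type is supplied by (\ref{e:px1cond}). Given $R=i$, the requesting node is itself one of these nodes exactly when $i=1$, so the number of \emph{other} storage nodes it may reach is $g=X_1-i$; writing $X_1=g+i$ makes this explicit. I would then split the failure event $\{S_1=0\}$ according to $g$. If $g=0$ (that is, $X_1=i$) there is no node to contact and the first download fails with certainty, contributing the term $\Pr\{X_1=i|R=i\}$. If $g\ge 1$, the requesting node picks one of the $g$ reachable nodes uniformly at random, and the download succeeds only if both that node and the requesting node remain in the cluster for the whole download time $\tn$.

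For the case $g\ge 1$ I would introduce an intermediate conditioning on the number $d$ of the $g$ reachable nodes that leave during the interval $\tn$. By the memorylessness of the exponential lifetime (\ref{e:departure}), the residual lifetimes of the nodes alive at the request instant are i.i.d.\ exponential with rate $\mu$, so the departures during $\tn$ are governed by a pure death process and the probability of exactly $d$ departures is $\theta(d,g)$, the quantity defined in (\ref{e:theta}). Conditioned on $d$ departures, the uniformly chosen contact node is one of the departing nodes with probability $d/g$ (certain failure), and otherwise the attempt succeeds precisely when the requesting node also survives, with probability $e^{-\mu\tn}$. Hence the conditional failure probability is $1-(1-d/g)e^{-\mu\tn}$; averaging over $d$ with weights $\theta(d,g)$ and using $\sum_{d}\theta(d,g)=1$ gives $1-e^{-\mu\tn}+e^{-\mu\tn}\sum_{d}\tfrac{d}{g}\theta(d,g)$ for each $g\ge1$. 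Summing over $X_1$ against $\Pr\{X_1=g+i|R=i\}$ and using completeness $\sum_{g\ge0}\Pr\{X_1=g+i|R=i\}=1$ to collapse the $g$-independent part, the $g=0$ and $g\ge1$ contributions combine into the stated expression after elementary rearrangement.

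The accounting above is routine; the step I expect to require the most care is justifying that $\theta(d,g)$ in (\ref{e:theta}) is indeed the probability that exactly $d$ of $g$ nodes depart within $\tn$. I would establish this exactly as in the proof of Lemma~\ref{l:px1}: the cumulative survival probability that at least $m$ of the $g$ nodes remain, for the linear death process with per-node rate $\mu$, admits the Lagrange-type representation $\sum_{i'=m}^{g}e^{-\mu_{i'}\tn}\prod_{j=m,\,j\ne i'}^{g}\frac{j}{j-i'}$ with $\mu_{i'}=i'\mu$, and $\theta(d,g)$ is precisely the telescoping difference of this quantity at $m=g-d$ and $m=g-d+1$, i.e.\ the probability that exactly $g-d$ nodes survive. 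A secondary subtlety worth stating explicitly is the bookkeeping of whether the requesting node belongs to the DS list, which is what turns $X_1$ into $g+i$ and makes the $g=0$ term equal to $\Pr\{X_1=i|R=i\}$ for both values of $i$; this rests on the conditional law (\ref{e:px1cond}) being the correct weighting, which itself follows from Lemma~\ref{l:px1} and Bayes' rule.
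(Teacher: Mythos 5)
Your proposal is correct and follows essentially the same route as the paper's proof: it conditions on the number of useful storage nodes $g$ (with the same bookkeeping $X_1=g+i$ that excludes the requesting node when $R=1$), then on the number of departures $d$ within $\tn$ with weight $\theta(d,g)$ interpreted as a pure-death-process probability, uses $d/g$ for a bad choice and $e^{-\mu\tn}$ for the requester's own survival, and collapses the sums by completeness. The only differences are cosmetic: the paper factors out the requester's survival event first (writing $(1-e^{-\mu\tn})+e^{-\mu\tn}\Pr\{S_1=0\mid\mathcal{A}_1,R=i\}$) and wraps the computation in a per-request ergodic-average argument to justify $\ell$-independence, whereas you absorb the survival factor inside the $(g,d)$-conditional term $1-(1-d/g)e^{-\mu\tn}$, which is algebraically equivalent.
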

 \begin{proof}
The proof is given in Appendix~\ref{app:ProofLem4}.
 \end{proof}

%%%%%%%%%%%%%%%%%%%%%%%%%%%%%%%%%%%%%%%%%%%%%
\begin{Lemma}\label{l:Sk}
The probability that the file can be completely retrieved from the DS network, i.e., the probability that k symbols are downloaded through D2D communication when $R=0$, or $k-1$ when $R=1$, conditioned to the type of request, is given by
\begin{equation}\nonumber
\Pr\{ \Sv_{[k-i]} =\onev_{k-i} |R=i\} =  e^{-(k-i)\mu \tn}   \sum_{g =1}^\infty \sum_{d=0 }^{g} \frac{g-d}{g} \gamma_{k-i}(g,d,i)  , i=0,1,
\end{equation}
where $ \gamma_{j}(g,d,i)  $ is defined by the recursion
\begin{equation}\label{e:gamma}
\gamma_j(g,d,i)=  \theta(d,g) \sum_{g'=1}^\infty \sum_{d'=0}^{g'-1}   \frac{g'-d'}{g'}  E(g,g',d') \gamma_{j-1}\!(g',d',i)
\end{equation}
for $d,g\geq 0$ and $i=0,1$, with  initial condition
\begin{equation}\label{e:gammaIC}
 \gamma_1(g,d,i)=\Pr\{X_1=g+i|R=i\} \theta(d,g)\,,
\end{equation}
and where
\begin{equation}\nonumber
E(g,g',d')= 
\begin{cases}
1 & \text{if} \quad g=g'-d'-1 \\
0 & \text{otherwise}\, .
\end{cases}
\end{equation}
The function $\theta(d,g)$ is given in~(\ref{e:theta}), and $\Pr\{X|R\}$ is given in~(\ref{e:px1cond}).
 \end{Lemma}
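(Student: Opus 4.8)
The plan is to compute $\Pr\{\Sv_{[k-i]}=\onev_{k-i}\mid R=i\}$ by tracking, step by step, the number of storage nodes that are still alive and not yet contacted, and by decomposing the all-success event $\{\Sv_{[k-i]}=\onev_{k-i}\}$ into a chain of $k-i$ single-step successes. As in the proof of Lemma~\ref{l:S1}, the governing observation is that the download at a given step succeeds if and only if \emph{both} the requesting node and the particular storage node it contacts survive the entire download window of length $\tn$. Because the lifetimes in~(\ref{e:departure}) are exponential and hence memoryless, the requester's survival over the $k-i$ consecutive windows is independent of the storage-node dynamics and contributes the factor $e^{-(k-i)\mu\tn}$. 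I would factor this out first, reducing the problem to the joint survival of the successively contacted storage nodes.

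Next I would set up the storage-node bookkeeping, using as the state the number $g$ of storage nodes that are alive and not yet contacted at the start of a download step. Conditioned on the type of request, the first step begins with such $g$ nodes with probability $\Pr\{X_1=g+i\mid R=i\}$ from~(\ref{e:px1cond}), where the offset $i$ excludes the requester itself when $R=1$. During a window of length $\tn$ each of the $g$ alive nodes leaves independently, so exactly $d$ of them depart with probability $\theta(d,g)$ of~(\ref{e:theta}); memorylessness is again essential here, since it guarantees that the residual lifetimes at the start of every window are i.i.d.\ exponential and that the per-window death law $\theta(d,g)$ is renewed identically at each step. Given $d$ departures, the uniformly chosen node lies among the $g-d$ survivors — i.e.\ the download succeeds — with probability $\frac{g-d}{g}$, which is precisely the weight appearing in front of $\gamma_{k-i}$ and inside the recursion.

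I would then establish the recursion~(\ref{e:gamma}) by induction on the step index. Define $\gamma_j(g,d,i)$ as the probability of the joint event that the first $j-1$ contacted nodes all survived, that there are $g$ alive uncontacted nodes at the start of the $j$th window, and that exactly $d$ of them depart during that window; the base case $j=1$ is then~(\ref{e:gammaIC}). For the inductive step I condition on the state $(g',d')$ at step $j-1$: the $(j-1)$th download succeeds with probability $\frac{g'-d'}{g'}$, and after deleting the just-used surviving node the number of alive uncontacted nodes entering step $j$ is deterministically $g'-d'-1$. This is exactly what the indicator $E(g,g',d')=\mathbf{1}\{g=g'-d'-1\}$ enforces, and multiplying by the fresh within-window death probability $\theta(d,g)$ yields~(\ref{e:gamma}). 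Summing $\gamma_{k-i}(g,d,i)$ against the final success weight $\frac{g-d}{g}$ over all reachable $(g,d)$ and reinstating the requester-survival factor $e^{-(k-i)\mu\tn}$ then gives the claimed expression; I would sanity-check it against Lemma~\ref{l:S1} at $k-i=1$.

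The step I expect to be the main obstacle is making the Markov reduction fully rigorous: justifying that the only quantity that must be carried across download steps is the count of alive, uncontacted nodes, and that the within-window departure law $\theta(d,g)$ is genuinely renewed at each step independently of the past. This rests on the memorylessness of the exponential lifetimes together with the convention stated in Section~\ref{s:D2D} that a contacted node is never contacted again, so that both the ``used'' survivor and the $d'$ departed nodes are removed from the available pool, leaving exactly $g'-d'-1$ nodes — the bookkeeping captured by $E(g,g',d')$. A secondary but routine point is verifying the sum-of-exponentials identity underlying $\theta(d,g)$, which is the same linear-death-process transition probability already exploited in Lemma~\ref{l:px1}.
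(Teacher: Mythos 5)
Your proposal is correct and follows essentially the same route as the paper's proof: factoring out the requester's survival probability $e^{-(k-i)\mu\tn}$, tracking the state $(g,d)$ of alive uncontacted DS-list nodes and their within-window departures, and building the same recursion for $\gamma_j$ via the success weight $\frac{g'-d'}{g'}$ and the bookkeeping indicator $E(g,g',d')$. The only difference is presentational: the paper carries an explicit request index $\ell$ and shows by induction that $\gamma_j^{(\ell)}$ is independent of $\ell$, whereas you work directly with the steady-state request, implicitly inheriting that independence from Lemma~\ref{l:px1}.
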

\begin{proof}
The proof is given in Appendix~\ref{app:ProofLem5}.
\end{proof}

%%%%%%%%%%%%%%%%%%%%%%%%%%%%%%%%%%%%%%%%%%%%%
\begin{Lemma}\label{l:Sj}
The probability of consecutively download $j\geq 1$ symbols and to fail the download of the $j+1$th one is
\begin{equation}
\Pr \{\Sv_{[j]}=\onev_{j}, S_{j+1}=0 |R=i \} =  \gamma_{j+1}(0,0,i)a_{j+1} +  \sum_{g=1}^\infty \sum_{d=0}^d \Big(  \frac{d}{g} \gamma_{j+1}(g,d,i)a_{j+1} + \frac{g-d}{g} \gamma_{j}(g,d,i)b_{j+1}  \Big)   \nonumber\, ,
\end{equation}
where $a_j=e^{-j\mu\tn}$, $b_j=e^{-(j-1)\mu\tn}(1-e^{-\mu \tn})$, and $\gamma_{j}(g,d,i)$ is given in~(\ref{e:gamma}).
 \end{Lemma}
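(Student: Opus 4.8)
The plan is to decompose the event $\{\Sv_{[j]}=\onev_{j}, S_{j+1}=0\}$ according to the mechanism that causes the $(j+1)$th download to fail, and to reuse the recursive bookkeeping through $\gamma_j(g,d,i)$ and $\theta(d,g)$ that was already set up for Lemma~\ref{l:Sk}. The guiding idea is that the requester's own lifetime is an independent exponential, so its survival over each download slot of length $\tn$ factors out as a power of $e^{-\mu\tn}$, while the availability of the reference storage nodes is governed by the death dynamics encoded in $\gamma$.

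First I would recall the operational meaning of $\gamma_j(g,d,i)$: conditioned on $R=i$ and excluding the requester's own lifetime, it is the probability of arriving at the $j$th download slot having completed $j-1$ successful downloads, with $g$ reference storage nodes still alive and available and exactly $d$ of them departing during that slot. Its initial condition~(\ref{e:gammaIC}) and recursion~(\ref{e:gamma}) encode precisely this, with the factor $\frac{g'-d'}{g'}$ being the probability that the randomly chosen node survives a successful slot and $E(g,g',d')$ enforcing the update $g=g'-d'-1$ of the pool after one success. In particular, as already used in Lemma~\ref{l:Sk}, $\sum_{g}\sum_{d}\frac{g-d}{g}\gamma_j(g,d,i)$ is the storage-side probability that the first $j$ downloads all succeed.

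Next I would split the failure at slot $j+1$ into disjoint cases by first conditioning on whether the requester itself survives that slot. If the requester survives all $j+1$ slots, contributing the factor $a_{j+1}=e^{-(j+1)\mu\tn}$, then the failure is on the storage side and is of exactly two kinds: either no reference node remains at slot $j+1$, which is the boundary state $g=0$ and contributes $\gamma_{j+1}(0,0,i)$, or at least one node is present ($g\geq1$) but the randomly chosen node is among the $d$ that depart during the slot, which happens with probability $\frac{d}{g}$ and contributes $\sum_{g\geq1}\sum_{d}\frac{d}{g}\gamma_{j+1}(g,d,i)$. If instead the requester survives the first $j$ slots but leaves during slot $j+1$, contributing $b_{j+1}=e^{-j\mu\tn}(1-e^{-\mu\tn})$, then the D2D download fails by the model's rule irrespective of the storage side, so one only needs the first $j$ downloads to have succeeded, which contributes $\sum_{g\geq1}\sum_{d}\frac{g-d}{g}\gamma_{j}(g,d,i)$. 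Multiplying each storage-side weight by the corresponding requester-survival factor and summing the three contributions yields the claimed expression; the $g=0$ boundary is kept out of the sums starting at $g=1$ and handled by the isolated term $\gamma_{j+1}(0,0,i)a_{j+1}$.

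The main obstacle is verifying that these failure modes are genuinely disjoint and exhaustive and that the factorization is legitimate. Conditioning first on requester survival in slot $j+1$ makes the requester-departure case disjoint from the two storage-side cases, and within the survival branch the events ``pool empty'' and ``chosen node departs'' are mutually exclusive; exhaustiveness follows from the model's list of failure triggers (the requester leaves, all reference nodes have gone, or a chosen node's download fails). The factorization of the requester's lifetime into $a_{j+1}$ and $b_{j+1}$ rests on the memorylessness of the exponential lifetime and its independence from the storage nodes' death process, which is the same independence that underlies the factor $e^{-(k-i)\mu\tn}$ in Lemma~\ref{l:Sk}. Finally, as in the proof of Lemma~\ref{l:px1}, I would note that the probabilities involved are independent of the request index for large $\ell$, so that the conditioning on $R=i$ is well defined and the expression does not depend on which request is considered.
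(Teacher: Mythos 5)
Your proof is correct and takes essentially the same route as the paper, whose own proof is just the remark that it ``follows the same lines as the proof of Lemma~\ref{l:Sk}'': you reuse exactly that machinery, namely the $\gamma_j(g,d,i)$ bookkeeping conditioned on the requester-survival events, the factorization of the requester's independent exponential lifetime into $a_{j+1}$ and $b_{j+1}$, and the enumeration of the disjoint failure modes (empty pool, chosen node departing, requester departing). As a minor aside, your reading also implicitly corrects the paper's typo $\sum_{d=0}^{d}$, which should be $\sum_{d=0}^{g}$.
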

\begin{proof}
The proof follows the same lines as the proof of Lemma~\ref{l:Sk}.
\end{proof}
%%%%%%%%%%%%%%%%%%%%%%%%%%%%%%%%%%%%%%%%%%%%%
%TEOREMA

Finally, the average D2D download delay and the average number of downloaded symbols from DS network are given in the following theorem.

\newtheorem{Theorem}{Theorem}
\begin{Theorem}\label{t:T_NI}
Consider the network described in Section~\ref{s:system_model}, where an $(n,k)$ MDS erasure correcting code is employed and where there are $\nc$ storage nodes in the cluster on average. Let $\tn$ be the time to download a symbol through D2D communication. The average D2D download delay and the corresponding average number of downloaded symbols are given by
\begin{align}
\TDD =& \TDDs \ps + \TDDr (1-\ps) \nonumber \\  
\eta= & \etas \ps + \etar (1-\ps) \nonumber
\end{align}
where $\ps=\Pr\{R=1\}$ is the probability that the request comes from a storage node of the DS list, and
\begin{align}
\etai= &(k-i) \Pr\{ \Sv_{[k-i]}=\onev_{k-i} | R=i\}  + \sum_{j=1}^{k-1-i} j   \Pr \{\Sv_{[j]}=\onev_{j}, S_{j+1}=0| R=i \} \, ,\nonumber\\
\TDDi= & \tn \Big( \etai +  c_{k,i}\Pr \{\ S_{1}=0 | R=i\} + \sum_{j=0}^{k-1-i}   \Pr \{\Sv_{[j]}=\onev_{j}, S_{j+1}=0 | R=i\} \Big)\, , \qquad \qquad i=0,1 \nonumber  \end{align}
where $ c_{k,i}=1$ for $k-i>0$ and $ c_{k,i}=0$ otherwise. 
\begin{proof}
The average D2D download delay is obtained as the sum of the average D2D delays in the case of requests originated from the DS list and of requests originating from the other nodes, weighted by the probabilities $p_s$ and $1-p_s$, respectively. The same approach is used for the corresponding average number of downloaded symbols. According to our model, the requesting node completes the download of $k-i$ symbols from the DS network in $(k-i) \tn$ t.u. with probability $\Pr\{ \Sv_{[k-i]}=\onev_k  | R=i \} $, while the partial download of $j<k-i$ symbols happens with probability \mbox{$\Pr \{\Sv_{[j]}=\onev_{j}, S_{j+1}=0  | R=i\}$} and incurs $(j+1)\tn$ t.u.. For $k-i>0$, in the computation of the average D2D download delay, we also consider the case where download from the DS network completely fails. The corresponding probability is $ \Pr \{S_1=0 | R=i\}$ and the delay is $\tn$. When the request originates from the DS list and $k=1$, i.e. $k=i$, no symbols need to be downloaded, therefore $\TDDs$ and $\etas$ are equal to zero.
\end{proof}
\end{Theorem}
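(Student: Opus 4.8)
The plan is to obtain both averages by conditioning on the type of request and then, for each type, enumerating the mutually exclusive outcomes of the serial D2D download. For the outer decomposition I would invoke the law of total expectation with respect to the RV $R$. Since $\ps=\Pr\{R=1\}$, writing $\TDD=\sum_{i\in\{0,1\}}\TDDi\,\Pr\{R=i\}=\TDDs\ps+\TDDr(1-\ps)$ and $\eta=\sum_{i\in\{0,1\}}\etai\,\Pr\{R=i\}=\etas\ps+\etar(1-\ps)$ is immediate, where $\TDDi$ and $\etai$ denote the conditional delay and conditional expected number of downloaded symbols given $R=i$. That the conditioning probability and the conditional quantities are independent of the request index for large $\ell$ has already been argued, so the steady-state values are well defined. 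This reduces the theorem to computing $\TDDi$ and $\etai$ for $i\in\{0,1\}$.

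Next I would fix $i$ and identify a partition of the sample space according to how the download terminates. Since the requesting node contacts storage nodes serially and stops as soon as it has collected the $k-i$ symbols it needs (or fails), exactly one of the following occurs: \emph{full recovery}, in which all $k-i$ symbols are downloaded, with probability $\Pr\{\Sv_{[k-i]}=\onev_{k-i}\mid R=i\}$ given by Lemma~\ref{l:Sk}; \emph{partial recovery}, in which $j$ symbols are downloaded and the $(j+1)$th attempt fails for some $1\le j\le k-1-i$, with probability $\Pr\{\Sv_{[j]}=\onev_{j},S_{j+1}=0\mid R=i\}$ given by Lemma~\ref{l:Sj}; and \emph{immediate failure}, in which $S_1=0$, with probability $\Pr\{S_1=0\mid R=i\}$ given by Lemma~\ref{l:S1}. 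The degenerate case $k=i$ (a request from the DS list with $k=1$) requires no download, so $\TDDs=\etas=0$; the indicator $c_{k,i}$ records exactly this boundary.

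With the partition in hand, $\etai$ follows by summing the number of D2D-downloaded symbols in each outcome, weighted by its probability: full recovery contributes $k-i$ symbols and partial recovery contributes $j$, while immediate failure contributes none, which yields the stated expression for $\etai$. For $\TDDi$ I would charge $\tn$ for every downloaded symbol and, per the model, an additional $\tn$ for the terminating failed attempt whenever the download does not fully succeed (a failed symbol download or the exhaustion of the reference storage nodes both cost $\tn$). Thus full recovery costs $(k-i)\tn$, partial recovery of $j$ symbols costs $(j+1)\tn$, and immediate failure costs $\tn$. Factoring $\tn$ and separating the ``$j$ symbols'' part, which reassembles $\etai$, from the single extra failed attempt present in every non-full outcome gives $\TDDi=\tn\bigl(\etai+\text{[total probability of a terminating failed attempt]}\bigr)$, i.e. the stated formula, with $c_{k,i}$ again removing the spurious failure term when $k=i$.

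The genuinely hard part, namely the combinatorial and queueing evaluation of the three download probabilities, is already discharged in Lemmas~\ref{l:S1}--\ref{l:Sj}, so the only delicate point in the theorem itself is the delay bookkeeping in the last step. One must be careful that the ``$+1$'' in the $(j+1)\tn$ cost of a partial download accounts for the extra failed attempt and is \emph{not} also charged against the symbol tally, and that the immediate-failure and partial-failure events each contribute exactly one failed attempt, so that $\Pr\{S_1=0\mid R=i\}$ is not double counted. Checking the small cases, in particular $k=1$, is the cleanest way to confirm the boundary behaviour encoded by $c_{k,i}$.
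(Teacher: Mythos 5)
Your proof is correct and follows essentially the same route as the paper's: condition on the request type $R$ to obtain the $\ps$-weighted mixture, then enumerate the mutually exclusive outcomes of the serial download (full recovery at $(k-i)\tn$, partial recovery of $j$ symbols at $(j+1)\tn$, immediate failure at $\tn$) and read off $\etai$ and $\TDDi$, with $c_{k,i}$ handling the $k=i$ boundary. Your closing caution about not double-counting $\Pr\{S_1=0 \mid R=i\}$ is well placed and consistent with the paper's intent, where the complete-failure event is charged exactly one failed attempt of duration $\tn$.
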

%FINE TEOREMA
%The results above allow also to prove the following corollary.
\begin{Corollary}\label{l:Tdd_ell}
 The average D2D download time for the $\ell$th request, $\TDD^{(\ell)}$, is independent of the specific request if the index $\ell$ is sufficiently large. 
 \end{Corollary}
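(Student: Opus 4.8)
The plan is to exploit that, by Theorem~\ref{t:T_NI}, the quantity $\TDD^{(\ell)}$ is a fixed deterministic function of only two families of probabilities evaluated at the $\ell$th request: the type-of-request probabilities $\Pr\{R^{(\ell)}=i\}$ and the conditional state distributions $\Pr\{X_1^{(\ell)}=x\mid R^{(\ell)}=i\}$. Indeed, every download probability entering $\etai$ and $\TDDi$ --- namely those of Lemmas~\ref{l:S1}, \ref{l:Sk}, and~\ref{l:Sj} --- is assembled from $\Pr\{X_1\mid R\}$ through the functions $\theta(d,g)$ and the recursion defining $\gamma_j(g,d,i)$, and none of these building blocks carries any explicit dependence on the request index. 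Hence it suffices to show that these two families of probabilities stabilize as $\ell$ grows large.

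First I would recall that the cluster population, and likewise the sub-population of storage nodes, evolves as an $\mathsf{M}/\mathsf{M}/\infty$ birth-death process, which is ergodic with stationary distributions $\pi_m(\Mc)$ and $\pi_m(\nc)$, respectively. Since every node requests at rate $\omega$ and there are on average $\Mc$ nodes, the aggregate request rate is $\omega\Mc$ and the epoch $W^{(\ell)}$ of the $\ell$th request grows without bound with $\ell$. Consequently the law of the population sampled at $W^{(\ell)}$, conditioned on a request falling in the current update window, converges to the request-size-biased stationary law independently of the initial condition. This is precisely the law --- carrying the weighting factor $(1-e^{-m\omega\Delta})$ --- used to obtain $\Pr\{X_1=x\}$, $\Pr\{Q=q\}$ and $\Pr\{V=v\}$ in Lemmas~\ref{l:px1}--\ref{l:pv}. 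Repeating the argument of Lemma~\ref{l:px1} with the $\ell$-indexed populations, as already observed in the discussion following~(\ref{e:px1cond}), then shows that $\Pr\{X_1^{(\ell)}=x\mid R^{(\ell)}=i\}$ and $\Pr\{R^{(\ell)}=i\}$ converge to the $\ell$-independent expressions $\Pr\{X_1=x\mid R=i\}$ and $\Pr\{R=i\}$. Substituting these limits into the formula of Theorem~\ref{t:T_NI} yields a value of $\TDD^{(\ell)}$ free of $\ell$, which is the claim.

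The step I expect to be the main obstacle is justifying the convergence of the request-instant population law: one must argue that sampling the queue state at the \emph{random} request epochs, rather than at a fixed deterministic time, still produces the stationary size-biased distribution in the limit. This is where the Poisson/exponential structure of the model is essential, since it decouples the inter-request times from the population process and lets one combine ergodicity of the $\mathsf{M}/\mathsf{M}/\infty$ queue with $W^{(\ell)}\to\infty$. Once this convergence is in hand, the remainder is merely continuity of the finite arithmetic expressions of Theorem~\ref{t:T_NI} in their probability arguments, which is immediate.
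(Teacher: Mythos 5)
Your proposal is correct and takes essentially the same route as the paper: you decompose $\TDD^{(\ell)}$ exactly as in Theorem~\ref{t:T_NI} into request-indexed probabilities and then reduce their asymptotic $\ell$-independence to that of $\Pr\{R^{(\ell)}=i\}$ and $\Pr\{X_1^{(\ell)}=x \mid R^{(\ell)}=i\}$, which is precisely what the paper does by invoking the $\ell$-independence established in the proofs of Lemmas~\ref{l:S1}--\ref{l:Sj} and the remark following~(\ref{e:px1cond}). Your closing discussion of ergodicity of the $\mathsf{M}/\mathsf{M}/\infty$ process and sampling at the random epochs $W^{(\ell)}$ merely makes explicit the stationarity argument that the paper's proof of Lemma~\ref{l:px1} uses implicitly, so it is added rigor rather than a different argument.
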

\begin{proof}
Similarly to the average D2D download delay, $\TDD^{(\ell)}$ is
\begin{equation}
\TDD^{(\ell)} = \TDDs^{(\ell)} \Pr\{ R^{(\ell)}=1\} + \TDDr^{(\ell)}\Pr\{ R^{(\ell)}=0\}\, , \nonumber \\  
\end{equation}
where 
\begin{align}
\TDDi^{(\ell)}=& \tn \Big( (k-i) \Pr\{ \Sv^{(\ell)}_{[k-i]}=\onev_{k-i} | R^{(\ell)}=i\} +  c_{k,i}\Pr \{\ S^{(\ell)}_{1}=0 | R^{(\ell)}=i\} \nonumber\\
		      &+ \sum_{j=0}^{k-1-i} (j+1)  \Pr \{\Sv^{(\ell)}_{[j]}=\onev_{j}, S^{(\ell)}_{j+1}=0 | R^{(\ell)}=i\} \Big)\, , \qquad \qquad i=0,1 \nonumber 
\end{align}
The Lemma follows from the fact that the probabilities in the expressions above are independent of $\ell$, when $\ell$ grows large.  
\end{proof}
%%%%%%%%%%%%%%%%%%%%%%%%%%%%%%%%%%%%%%%%%%%%%
\section{Numerical Results}
\label{sec:NumericalResults}
%\textcolor{red}{Commentare sulle sommatorie all'infinito nelle equazioni. Mostrare una figura con $\pi(x)$ e con $\gamma_j(x,x)$, ad esempio, e far vedere che diventano subito piccole al crescere di $x$.}
% according to a Poisson random process.

% and compute the download delay when we fix the average number of nodes in the cluster. 

%

In this section, we evaluate the average download delay when content is cached using MDS codes for a cluster with $\Mc=30$ nodes on average, departure rate $\mu=1$, and request rate $\omega=0.02$. We compare the average file download delay $\TDW$ of the considered network with MDS-coded DS with the delay of the traditional scenario where the content is solely downloaded from the BS, denoted by $T_\text{ref}=k \tbs$ and with uncoded caching.  In the following, with no loss of generality, we set $T_\text{ref}=1$ t.u.. 

We recall that each file that is cached in the DS library is divided into $k$ symbols, and encoded using an ($n,k$) MDS code, where $n$ is the number of storage nodes in the cell. The code parameters are chosen such that $k\leq \nc$. In this way, the \emph{average storage overhead} in a cluster, $\nc-k$, is positive, which increases the probability that the content is downloaded through D2D communication only. Alternatively, we can use the same $(\nc,k)$ MDS code for each cluster, but in this case the BS must continuously restore the initial state of reliability of the DS network when storage nodes leave the clusters~\cite{Ped16}. %Note that here we do not address the repair problem as in~\cite{Ped16}. Interestingly, the encoding of the files over the $n$ storage nodes in the cell makes the repair problem less critical, since nodes arriving in the cluster can bring useful encoded symbols if they acted as storage nodes in the adjacent cluster, since each storage node in the cell stores unique symbols. The price to be paid is in terms of encoding/decoding complexity due to the length of the codewords, but this can be eventually solved by using suboptimal MDS code, such as low complexity Fountain codes~\cite{Lub02,Sho06}. 

We first consider the special case where $F=Z$, therefore the probability of hitting the cache $\Pr\{H=1\}$ is $1$. 
In Figs.~\ref{f:0.1}--\ref{f:0.001}, we show the gain that can be achieved using MDS-coded caching, by reporting the ratio between $T_\text{ref}$ and $\TDW$ as a function of the update interval $\Delta$. The  infinite series involved in the computation of $\TDW$ are truncated to a given value $t$, chosen according to $\text{argmin}_{t>\nc}\{ \pi_t(\nc)<10^{-5} \}$ when involving the number of storage nodes and to $\text{argmin}_{t>\Mc}\{ \pi_t(\Mc)<10^{-5} \}$ when involving the number of nodes in general. We fix the ratio $k/\nc$ to be $1/3$, and consider several MDS codes. Moreover, we also consider an uncoded scenario where one storage node on average in the cluster stores the uncoded files. In Figs.~\ref{f:0.1}, \ref{f:0.01}, and \ref{f:0.001}, $\tn$ is $10$, $100$, and $1000$ times, respectively, smaller than $\tbs$. In the figures, the solid lines correspond to the analytical closed-form expressions derived in the previous sections and markers correspond to simulation results. It is observed that the analytical expressions predict very well the actual performance, which shows the goodness of the approximations introduced in~(\ref{e:approxI}) and (\ref{e:pq_approx}).
The results clearly show that MDS-coded DS can greatly improve the performance in terms of content download delay with respect to the case where content is downloaded from the BS, provided that the update interval, $\Delta$, is sufficiently small. For example, for $\tbs=10 \tn$ and $\Delta=1$, a speed-up factor of around $19$ in the download is achieved with respect to the case of downloading from the BS using a $(15,5)$ MDS code. Interestingly, the results also show that the performance improves when $k$ increases. In particular, simple replication (repetition coding) is very inefficient and much better performance are achieved using larger MDS codes (of the same rate).

We now consider the more general case where only part of the library of files is cached in the devices. We assume that the library has a size of $Z=1000$ files, and each storage node stores one symbol for each of the $F$ most popular files. We assume that the each file is of size $100$ MB, which corresponds to a $10$-minutes video. We further assume that each storage node makes $6$ GB available for caching. In this case, the number of cached files $F$, that corresponds to the number of symbols cached by each storage node, increases by increasing $k$, since the size of one encoded symbols is $100$MB$/k$ and $F$ is clearly obtained by dividing the storage capacity of the node by the symbol size. In Fig.~\ref{f:sigma}, we show the download speedup factor $T_\text{ref}/\TDW$ as a function of the parameter $\sigma$, which regulates the relative popularity of the files. For example, a large value of $\sigma$ represents the case where few popular files are responsible for the majority of the download traffic. The figure refers to the case where $\tbs=100\tn$ and $\Delta=0.5$. The results confirm the gain that can be achieved by MDS-coded distributed caching, but highlight another important aspect. When the whole library is not cached in the network, i.e., $F<Z$, the gain reduction is not negligible, especially for low values of $\sigma$. This fact  suggests that the adopted deterministic allocation, where the $F$ most popular files are equally stored in the DS network, is suboptimal.

\begin{figure}[!t]
\centering{} \includegraphics[width=\columnwidth]{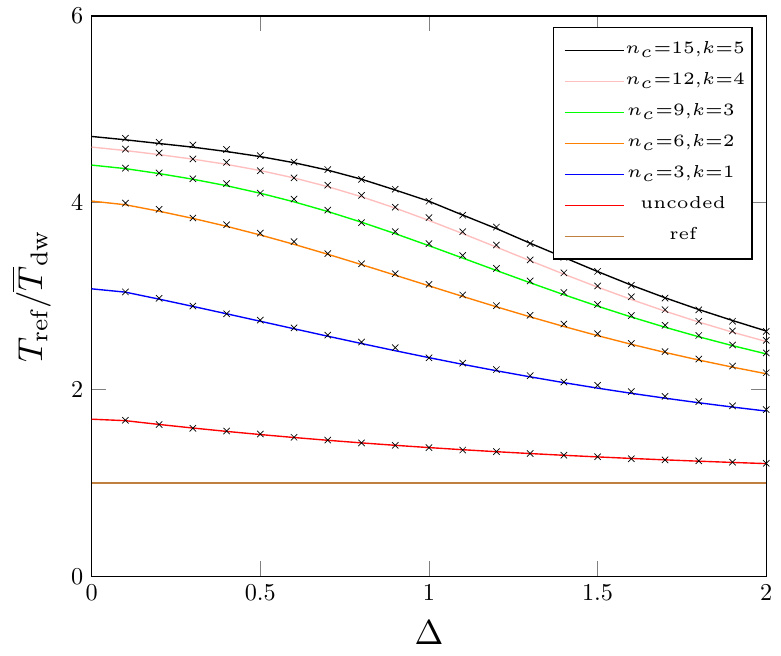}
\vspace{-4mm}
 \caption{Ratio between the file download delay without D2D communication and that of the scenario using MDS-coded distributed caching. $\tbs=10 \tn$. Solid lines show analytical results and markers simulation results.}\label{f:0.1}
 \vspace{-3mm}
\end{figure}

\begin{figure}[!t]
\centering{} \includegraphics[width=\columnwidth]{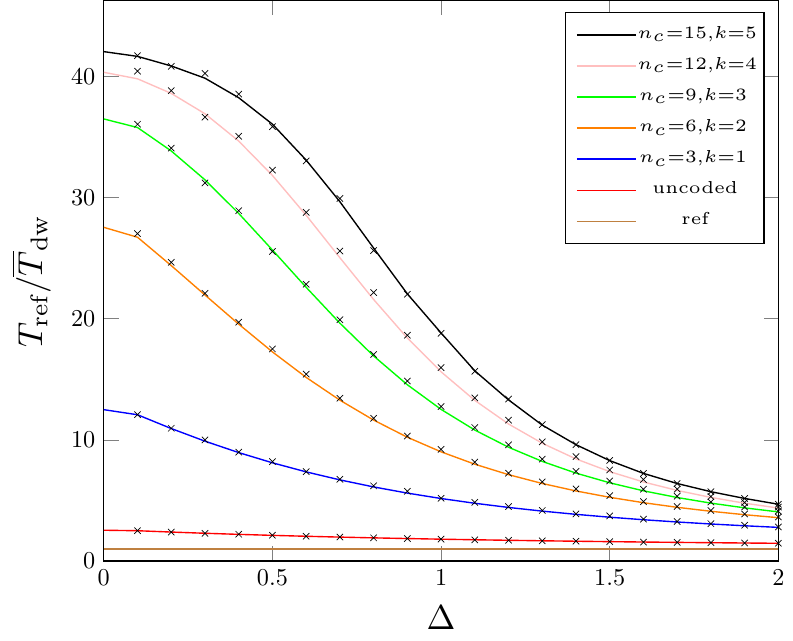}
\vspace{-4mm}
 \caption{Ratio between the file download delay without D2D communication and that of the scenario using MDS-coded distributed caching. $\tbs=100 \tn$. Solid lines show analytical results and markers simulation results.}\label{f:0.01}
  \vspace{-3mm}
\end{figure}

\begin{figure}[!t]
\centering{} \includegraphics[width=\columnwidth]{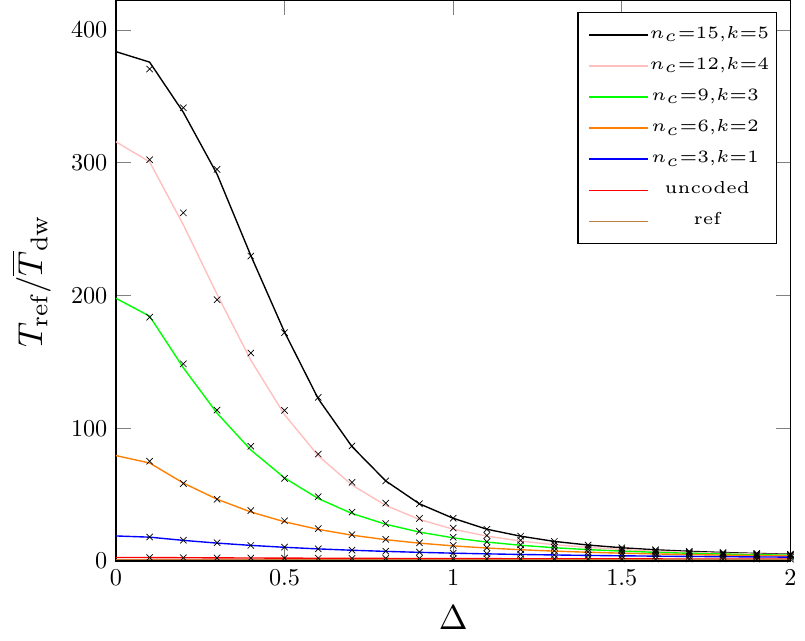}
\vspace{-4mm}
  \caption{Ratio between the file download delay without D2D communication and that of the scenario using MDS-coded distributed caching. $\tbs=1000 \tn$. Solid lines show analytical results and markers simulation results.}\label{f:0.001}
   \vspace{-5mm}
\end{figure}
\begin{figure}
\centering{} \includegraphics[width=\columnwidth]{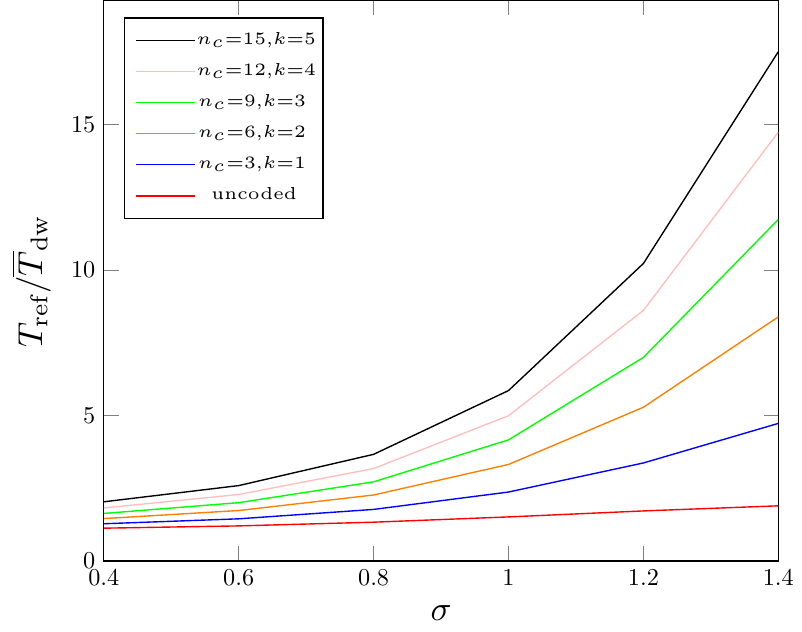}
\vspace{-4mm}
  \caption{Ratio between the file download delay without D2D communication and that of the scenario using MDS-coded distributed caching for $F<Z$. $\tbs=100 \tn$ and $\Delta=0.5$.}\label{f:sigma}
   \vspace{-5mm}
\end{figure}

\section{Conclusions}
\label{sec:Conclusions}

In this paper, we considered the cache of popular content in the mobile devices of a cellular network using maximum distance separable erasure correcting codes to speed-up content delivery. We derived analytical expressions for the average download delay and showed that MDS-coded distributed caching may dramatically reduce the download delay with respect to the traditional case where content is always downloaded from the base station.

%application of coded distributed storage to wireless network and computed the average download delay when users are allowed to use device-to-device communication. We investigated the performance of maximum distance separable codes, showing that distributed storage 

%The overall performance evaluation of the cell is out of the scope of this work, and would require a more involved mobility model, since the model based on pure Poisson processes is in fact valid if we consider the cluster in isolation. On the other hand, it represents an interesting future work... 

\appendices

\section{Proof of Lemma 1}
\label{app:ProofLem1}
We denote by $X_1^{(\ell)}$ the number of storage nodes available for download at the time of the $\ell$th download request, i.e., the number of storage nodes of the DS list that have not left the cluster at the time of the request. We compute $\Pr\{X_1=x\}$ by averaging over an infinite number of requests, 
\begin{equation}\label{e:x1}
\Pr\{X_1=x \}=\lim_{L \to \infty} \frac{1}{L} \sum_{\ell=1}^L \Pr\{X_1^{(\ell)}=x \}\, .
\end{equation}
Similarly, let $Y^{(\ell)}$ be the number of storage nodes at the beginning of the update interval wherein the $\ell$th request arrives, denoted by $\Delta^{(\ell)}$. We have
\begin{equation}\label{e:x1l}
\Pr\{X_1^{(\ell)}=x\}=\sum_{y=0}^\infty \Pr\{X_1^{(\ell)}=x | Y^{(\ell)}=y\} \Pr\{Y^{(\ell)}=y\}\,.
\end{equation}
In~\cite{Ped16}, it was shown that the probability  $\Pr\{X_1^{(\ell)}=x | Y^{(\ell)}=y\} $ does not depend on $\ell$ (when $\ell$ grows large), and is given by~(\ref{e:px_J}). Its derivation is based on the observation that the number of storage nodes available for download in the update interval is described by a Poisson death process. The probability $ \Pr\{Y^{(\ell)}=y\}$ can be written as
\begin{equation}\label{e:y}
 \Pr\{Y^{(\ell)}=y\}= \Pr\{ \tilde{Y}^{(\ell)} =y | \text{req. in }\Delta^{(\ell)} \}=\frac{ \Pr\{ \text{req. in }\Delta^{(\ell)}  | \tilde{Y}^{(\ell)} =y \}\Pr\{ \tilde{Y}^{(\ell)} =y \} } {  \Pr\{ \text{req. in }\Delta^{(\ell)}  \} }\, ,
\end{equation}
where in the second equality we used Bayes' rule. In~(\ref{e:y}), $\tilde{Y}^{(\ell)}$ is the number of storage nodes at the beginning of the update interval, which is described by a birth-death Poisson process, thus $\Pr\{ \tilde{Y}^{(\ell)} =y \} =\pi_y(\nc)$.
The probability $\Pr\{ \text{req. in }\Delta^{(\ell)}   \}$ is the probability that there is at least one request in $\Delta^{(\ell)}$.  It depends on the inter-request time, which in turn depends on the number of nodes in the cluster. Therefore, we compute
\begin{equation}\nonumber
\Pr\{ \text{req. in }\Delta^{(\ell)}   \}= \sum_{m=1}^\infty (1-e^{-m\omega \Delta}) \pi_m(\Mc) \, ,
\end{equation}
where $1-e^{-m\omega \Delta}$ is the probability that the inter-request time is shorter than $\Delta$ when $m$ nodes are present in the cluster.
Similarly, we compute $\Pr\{ \text{req. in }\Delta^{(\ell)}  | \tilde{Y}^{(\ell)} =y \}$ as
\begin{equation}\nonumber
\Pr\{ \text{req. in }\Delta^{(\ell)}  | \tilde{Y}^{(\ell)} =y \} =\sum_{m=y}^\infty (1-e^{-m\omega \Delta}) \pi_{m-y}(\Mc-\nc)\,,
\end{equation}
where $\pi_{m-y}(\Mc-\nc)$ is the probability that there are $m$ nodes in the cluster, given that there are $y$ storage nodes. Since these probabilities are independent of the specific request, we conclude that $ \Pr\{Y^{(\ell)}=y\}$ is also independent of $\ell$. Substituting (\ref{e:y}) into (\ref{e:x1l}), we observe that $\Pr\{X_1^{(\ell)}=x\}$ is also independent of $\ell$ and using~(\ref{e:x1}) we prove the lemma.

\section{Proof of Lemma 4}
\label{app:ProofLem4}

We compute the conditional probability that no symbols are downloaded by averaging over an infinite number of requests, 
\begin{equation}\nonumber
\Pr\{ S_1=0|R=i\} =\lim_{L \to \infty} \frac{1}{L} \sum_{\ell=1}^L \Pr\{ S_1^{(\ell)}=0  |R^{(\ell)}=i\}\, ,
\end{equation}
where $S_1^{(\ell)}\in\{0,1\}$ is the RV describing the download of the first symbol for the $\ell$th request. We now consider the computation of $\Pr\{ S_1^{(\ell)}=0|R^{(\ell)}=i\}$. The recovery of the first symbol fails with probability $1$ if the requesting node leaves the cell before completing the download. It also fails if the requesting node stays in the cluster but no storage nodes are available or if it chooses to download from a storage node which departs before $\tn$ t.u. from the start of the download. Let $O^{(\ell)}$ be the departure time of the node which places the $\ell$th request and let $\mathcal{A}_1^{(\ell)}=\{  O^{(\ell)} -  W^{(\ell)} > \tn\}$ be the event that the node which places the $\ell$th request stays in the network for more than $  \tn $ t.u. from the start of the download. The corresponding probability does not depend on $\ell$ and is easily computed as $\Pr\{ \mathcal{A}_1^{(\ell)}\} =e^{-\mu \tn}$. Similarly, the probability that the requesting node departs before $\tn$ t.u. from the start of the download is $(1-  e^{-\mu \tn} )$. Therefore, the conditional probability that the $\ell$th request fails the first symbol download is 
\begin{equation}\label{e:s1l}
\Pr\{ S_1^{(\ell)}=0|R^{(\ell)}=i\}=(1-  e^{-\mu \tn} )+   e^{-\mu \tn} \Pr\{ S_1^{(\ell)}=0  | \mathcal{A}_1^{(\ell)}, R^{(\ell)}=i\}\, .
\end{equation}
Let $G^{(\ell)}_1\in\{0,\dots,\infty \}$ be the number of storage nodes useful for download for the $\ell$th request, i.e., the number of storage nodes in the DS list that have not left the cluster at the time of the $\ell$th request, excluding the requesting node itself if it belongs to the DS list. Let $D^{(\ell)}_1\in\{0,\dots,\infty\}$ the number of departures in $\tn$ t.u. among the $G^{(\ell)}_1$ storage nodes. The probability $ \Pr\{ S_1^{(\ell)}=0  | \mathcal{A}_1^{(\ell)}, R^{(\ell)}=i\} $ can be written as
\begin{align}
 &\Pr\{ S_1^{(\ell)}=0  | \mathcal{A}_1^{(\ell)}, R^{(\ell)}=i\} =\nonumber\\
 &=\sum_g\sum_d  \Pr\{ S_1^{(\ell)}=0  | \mathcal{A}_1^{(\ell)}, R^{(\ell)}=i, G_1^{(\ell)}=g, D_1^{(\ell)}=d  \} \Pr\{ G_1^{(\ell)}=g, D_1^{(\ell)}=d| \mathcal{A}_1^{(\ell)}, R^{(\ell)}=i \}\nonumber \\ 
&= \sum_g\sum_d  \Pr\{ S_1^{(\ell)}=0  | \mathcal{A}_1^{(\ell)}, G_1^{(\ell)}=g, D_1^{(\ell)}=d  \} \Pr\{  G_1^{(\ell)}=g, D_1^{(\ell)}=d | R^{(\ell)}=i \}\, . \label{e:s1cond}
\end{align}
Equation~(\ref{e:s1cond}) is obtained by observing that $i)$ the probability that the download of the first symbol fails conditioned to $G_1^{(\ell)}$ and $D_1^{(\ell)}$ is independent of the type of request, and that $ii)$ the number of useful storage nodes and the number of departures in $\tn$ t.u. among them is independent of the departure of the requesting node.  The probability   $\Pr\{ S_1^{(\ell)}=0  | \mathcal{A}_1^{(\ell)}, G_1^{(\ell)}=g, D_1^{(\ell)}=d  \}$ is equal to 1 if there are no useful storage nodes, i.e., $g=0$. Otherwise, it equals the probability to choose one of the $d$ storage nodes that leave the cell in $\tn$ t.u., i.e., $d/g$, with $d\leq g$. We observe that the number of departures of useful storage nodes conditioned to their number is independent of the type of request and that the number of storage nodes useful for download $G_1^{(\ell)}$ is related to $X^{(\ell)}_1$ by
\begin{equation}\nonumber
\Pr\{G_1^{(\ell)}=g|R^{(\ell)}=i\}=\Pr\{ X_1^{(\ell)}=g+i |R^{(\ell)}=i\}\, ,
\end{equation}
since when the request originates from a storage node of the DS list, the requesting node itself is not counted among the useful storage nodes. Therefore, the probability $\Pr\{  G_1^{(\ell)}=g, D_1^{(\ell)}=d | R^{(\ell)}=i \}$ can be written as
\begin{equation}
\Pr\{  G_1^{(\ell)}=g, D_1^{(\ell)}=d | R^{(\ell)}=i \}= \Pr\{  D_1^{(\ell)}=d|G_1^{(\ell)}=g\} \Pr\{X_1^{(\ell)}=g+i|R^{(\ell)}=i  \}\, .
\end{equation}
We denote by $\theta(d,g)$ the probability $\Pr\{  D_1^{(\ell)}=d|G_1^{(\ell)}=g\}$, given in~(\ref{e:theta}). Its derivation is similar to that of $\Pr\{X_1 | Y \}$~\cite{Ped16}. The probability $\Pr\{X_1^{(\ell)}=g+i|R^{(\ell)}=i  \}$ is independent of the specific request and is given by~(\ref{e:px1cond}).

 After simple manipulations, we finally obtain
\begin{equation}\label{e:PS1}
\Pr\{ S_1^{(\ell)}=0|R^{(\ell)}=i\}=   1 + e^{-\mu \tn}  \Big(  \Pr\{X_1=i|R=i\} + \sum_{g =1}^\infty \sum_{d =0 }^{g} \frac{d}{g}    \Pr\{X_1=g+i|R=i\} \theta(d,g) - 1\Big).
\end{equation}
Since the probabilities involved in~(\ref{e:PS1}) are all independent of $\ell$, the lemma is proved.

\section{Proof of Lemma 5}
\label{app:ProofLem5}

To evaluate the probability of complete download from the DS network, we start with the following limit,
\begin{equation}\nonumber
\Pr\{  \Sv_{[k-i]} =\onev_{k-i}| R=i \}= \lim_{L \to \infty}   \frac{1}{L} \sum_{\ell=1}^L \Pr\{  \Sv_{[k-i]}^{(\ell)}=\onev_{k-i}|R^{(\ell)}=i \}\, ,
\end{equation}
where $\Sv_{[j]}^{(\ell)}=(  S_1^{(\ell)}, \dots, S_j^{(\ell)})$ and $S_i^{(\ell)}$ describes the successful symbol download at the $i$th attempt of the $\ell$th request. We consider the $\ell$th request and, similarly to the proof of Lemma~\ref{l:S1}, we will find that this probability is independent of $\ell$.  We denote by the RV $G_j^{(\ell)}\in\{0,\ldots,\infty\}$ the number of storage nodes useful for download at the time of the $j$th attempt of the $\ell$th request, i.e., the storage nodes of the DS list not yet contacted, excluding the requesting node if it belongs to the DS list. We denote by the RV $D_j^{(\ell)}  \in \{0,\ldots,\infty\}$ the number of departures in $\tn$ t.u. among the $G_j^{(\ell)}$ nodes. We also denote by $\mathcal{A}_j^{(\ell)}=\{  O^{(\ell)} -  W^{(\ell)} > j \tn\}$ the event that the node which places the $\ell$th request stays in the network for more than $ j \tn $ t.u. from the start of the download. The corresponding probability does not depend on $\ell$ and is given by $\Pr\{ \mathcal{A}_j^{(\ell)}\} =e^{-j\mu \tn}$. The probability of complete download is zero if the requesting node departs before $ (k-i)\tn $ t.u. from the start of the download, therefore we can write
\begin{equation}
\Pr\{  \Sv_{[k-i]}^{(\ell)}=\onev_{k-i}|R^{(\ell)}=i \}=e^{-(k-i)\mu \tn} \Pr\{  \Sv_{[k-i]}^{(\ell)}=\onev_{k-i}|\mathcal{A}_{k-i}^{(\ell)}, R^{(\ell)}=i  \}\,.
\end{equation}
The probability $\Pr\{  \Sv_{[k-i]}^{(\ell)}=\onev_{k-i}|\mathcal{A}_{k-i}^{(\ell)}, R^{(\ell)}=i  \}$ can be written as
\begin{align}
&\Pr\{  \Sv_{[k-i]}^{(\ell)}=\onev_{k-i}|\mathcal{A}_{k-i}^{(\ell)}, R^{(\ell)}=i  \}=\nonumber\\
&=\sum_{g} \sum_{d} \Pr\{S^{(\ell)}_{k-i}=1|  \mathcal{A}_{k-i}^{(\ell)}, G_{k-i}^{(\ell)}=g, D_{k-i}^{(\ell)}=d  \} \Pr\{  G_{k-i}^{(\ell)}=g, D_{k-i}^{(\ell)}=d,\Sv_{[k-i-1]}^{(\ell)}=\onev_{k-i-1} |  \mathcal{A}_{k-i}^{(\ell)}, R^{(\ell)}=i \}\nonumber \\
&=\sum_{g=1}^\infty \sum_{d=0}^{g} \frac{g-d}{g} \gamma^{(\ell)}_{k-i}(g,d,i)\, .  \label{e:gamma1} 
\end{align}
The last equality is obtained by observing that the probability $\Pr\{ S_{k-i}^{(\ell)}=1  |  \mathcal{A}_{k-i}^{(\ell)},  G_{k-i}^{(\ell)}=g,  D_{k-i}^{(\ell)}=d\}$ for $g>0$ and $d<g$ equals the probability to choose one of the storage nodes of the DS list that remains in the cluster, i.e., $\frac{g-d}{g}$. Moreover, we have defined $\gamma^{(\ell)}_j(g,d,i)\triangleq\Pr\{  G_{j}^{(\ell)}=g, D_{j}^{(\ell)}=d, \Sv_{[j-1]}^{(\ell)}=\onev_{j-1} |  \mathcal{A}_{j}^{(\ell)}, R^{(\ell)}=i \}$, that can be computed as
\begin{align}
&\gamma^{(\ell)}_j(g,d,i)=\Pr\{D_{j}^{(\ell)}=d| G_{j}^{(\ell)}=g\} \sum_{g'} \sum_{d'}  \Pr\{G_{j}^{(\ell)}=g|G_{j-1}^{(\ell)}=g', D_{j-1}^{(\ell)}=d', S_{j-1}^{(\ell)}=1 \}  \nonumber \\
&\quad\cdot \Pr\{S^{(\ell)}_{j-1}=1|  \mathcal{A}_{j}^{(\ell)}, G_{j-1}^{(\ell)}=g', D_{j-1}^{(\ell)}=d' \} \Pr\{  G_{j-1}^{(\ell)}=g', D_{j-1}^{(\ell)}=d', \Sv_{[j-2]}^{(\ell)}=\onev_{j-2} |  \mathcal{A}_{j-1}^{(\ell)}, R^{(\ell)}=i \}\, ,\nonumber
\end{align}
for $j>1$. We also define $ E(g,g',d')\triangleq \Pr\{G_{j}^{(\ell)}=g|G_{j-1}^{(\ell)}=g', D_{j-1}^{(\ell)}=d', S_{j-1}^{(\ell)}=1\}$, which is equal to one if $g=g'-d'-1$ and $g'>d'$, and zero otherwise. The condition $g=g'-d'-1$ follows from the fact that the number of useful storage nodes after a successful symbol download is equal to the number of useful storage nodes still alive, \mbox{$g'-d'$}, minus the storage node just used. The condition $g'>d'$ comes from the fact that the $(j-1)$th symbol download is assumed to be successful, i.e., $S_{j-1}^{(\ell)}=1$.   

It is easy to prove by induction that the probability in~(\ref{e:gamma1}) does not depend on $\ell$. By defining $\gamma_j(g,d,i)\triangleq \gamma^{(\ell)}_j(g,d,i)$, we obtain the recursion~(\ref{e:gamma}), with initial condition~(\ref{e:gammaIC}). The probabilities $\gamma_j(g,d,i)$, $j\geq 1$, are equal to zero for $d>g$.

\bibliographystyle{IEEEtran}

\end{document}